\DeclareMathOperator{\dist}{dist}
\DeclareMathOperator*{\EX}{\mathbb{E}}
\newcommand{\eps}{\epsilon}
\title{Sublinear Average-Case Shortest Paths\\in Weighted Unit-Disk Graphs}
\date{\vspace{-5ex}}
\author[1]{Adam Karczmarz\thanks{\texttt{a.karczmarz@mimuw.edu.pl}. Supported by ERC Consolidator
Grant 772346 TUgbOAT and by the Foundation for Polish Science (FNP) via the START programme.}}
\author[1]{Jakub Pawlewicz\thanks{\texttt{pan@mimuw.edu.pl}. Supported by ERC Consolidator Grant 772346 TUgbOAT.}}
\author[1]{Piotr Sankowski\thanks{\texttt{sank@mimuw.edu.pl}. Supported by ERC Consolidator Grant 772346 TUgbOAT.}}
\affil[1]{Institute of Informatics, University of Warsaw, Poland}
\def\poly{\operatorname{poly}}
\def\polylog{\operatorname{polylog}}
\theoremstyle{plain}
\newtheorem{theorem}{Theorem}
\newtheorem{lemma}[theorem]{Lemma}
\newtheorem{corollary}[theorem]{Corollary}
\newtheorem{fact}[theorem]{Fact}
\newtheorem{observation}[theorem]{Observation}
\newtheorem{definition}[theorem]{Definition}
\newtheorem{remark}[theorem]{Remark}
\begin{document}

\maketitle

\begin{abstract}
  We consider the problem of computing shortest paths in weighted
  unit-disk graphs in constant dimension $d$. Although the single-source and all-pairs variants
  of this problem are well-studied in the plane case, no non-trivial exact distance
  oracles for unit-disk graphs have been known to date, even
  for $d=2$.

  The classical result of Sedgewick and Vitter [Algorithmica '86] shows
  that for weighted unit-disk graphs in the plane 
  the $A^*$ search has average-case performance
  superior to that of a standard shortest path algorithm, e.g., Dijkstra's algorithm.
  Specifically, if the $n$ corresponding points of a weighted unit-disk graph $G$
  are picked from a unit square uniformly at random, and the connectivity radius is
  $r\in (0,1)$, $A^*$ finds a shortest path in $G$ in $O(n)$ expected time when $r=\Omega(\sqrt{\log n/n})$,
  even though $G$ has $\Theta((nr)^2)$ edges in expectation. In other
  words, the work done by the algorithm is in expectation proportional to the number of vertices
  and not the number of edges.


  In this paper, we break this natural barrier and
  show even stronger sublinear time results.  We propose a new
  heuristic approach to
  computing point-to-point exact shortest paths in
  unit-disk graphs. We analyze the average-case behavior
  of our heuristic using the same random graph model as used by Sedgewick and Vitter
  and prove it superior to $A^*$.
	Specifically, we show that, if we are able to report the set of all $k$
  points of $G$ from an arbitrary rectangular region of the plane
  in $O(k + t(n))$ time, then a shortest path between arbitrary two points of
  such a random graph on the plane can be found
  in $O(1/r^2 + t(n))$ expected time.
  In particular, the state-of-the-art range reporting data structures imply
  a sublinear expected bound for all $r=\Omega(\sqrt{\log n/n})$
  and $O(\sqrt{n})$ expected bound for $r=\Omega(n^{-1/4})$
  after only near-linear preprocessing of the point set.

  Our approach naturally generalizes to higher dimensions $d\geq 3$ and yields sublinear
  expected bounds for all $d=O(1)$ and sufficiently large $r$.


\end{abstract}

\section{Introduction}
Computing shortest paths is certainly one of the most fundamental graph problems
and has numerous theoretical and practical applications.
The two classical variants of the shortest paths problem are \emph{single-source shortest paths} (SSSP)
and \emph{all-pairs shortest path} (APSP).
A common generalization of these variants is the \emph{distance oracle} problem,
where we are allowed to preprocess a given network into a (possibly small) data structure
that is later used to answer arbitrary \emph{point-to-point} shortest paths queries.
Clearly, SSSP and APSP algorithms can be viewed as extreme solutions to the
distance oracle problem: the former can be used without any preprocessing to query a distance in near-linear
time, whereas the latter precomputes the answers
to all the $n^2$ possible queries and thus can answer queries in constant time. Hence,
when constructing distance oracles we seek a tradeoff between these two extremes.
Unfortunately, it is not known how to obtain a non-trivial (with both subquadratic
space and sublinear query time) \emph{exact} distance oracle
for general graphs. Subquadratic space and constant query time oracle is only known
for undirected weighted graphs if approximation factor of at least $3$ is allowed~\cite{oracles}.

Due to this theoretical inefficiency of distance oracles, researchers either focus
on special graph classes, or study approximate approaches.
On one hand, near-optimal (in terms of space and query time) exact distance oracles have
been recently described for planar graphs~\cite{10.1145/3313276.3316316}.
On the other hand, for many important graph classes near-optimal $(1+\eps)$-approximate distance
oracles are known~\cite{10.5555/2027127.2027142,10.1145/1146381.1146411}.

Nevertheless, in practice heuristic approaches are usually preferable -- for an overview
of used techniques see~\cite{Bast2016}. However, the term ``heuristic'' in this domain usually refers to ways of speeding up exact algorithms.
There are some examples of heuristics that have been analyzed theoretically and
proved to yield speedups in meaningful settings, see e.g.,~\cite{10.1145/2985473}.

Perhaps the most well-known heuristic approach to speeding up a shortest path
\linebreak computation is a variant of Dijkstra's algorithm called the \emph{$A^*$ search}~\cite{HartNR68}.
This algorithm incorporates a heuristic function that lower-bounds
the distance to the target and uses it to decide which search paths to follow first.
The algorithm is still guaranteed to find the shortest path
to the target vertex, but the number of vertices explored can be much smaller
compared to the standard Dijkstra's algorithm. For example, if vertices
of the network correspond to points in the plane, the Euclidean distance
to the target is a valid and well-working heuristic function. The natural question
arises when such algorithms perform provably better than in the worst-case.

\subsection{Shortest paths in weighted unit-disk graphs}
The seminal result that answers this question is by Sedgewick and Vitter~\cite{sedgewick1986}, who
studied the performance of $A^*$ search on various random geometric graph models.
Perhaps the most interesting of their results concerns the \emph{weighted unit-disk graphs}.
In a weighted unit-disk graph with \emph{connectivity radius} $r$,
vertices correspond to points on the plane.
An edge between two distinct vertices (points) $u,v$ exists in such a graph
if ${||u-v||_2\leq r}$ and has weight $||u-v||_2$.
This class of geometric graphs has been widely studied from the algorithmic
perspective since such graphs can model e.g., ad-hoc communication networks.
A \emph{random weighted unit-disk graph} $G$, given $n$ and radius $r\in (0,1)$, is
obtained from a set of $n$ random points of a unit square $[0,1]^2$.
Note that such a random $G$ has $\Theta((nr)^2)$ edges in expectation.
However, Sedgewick and Vitter~\cite{sedgewick1986} show that, assuming that the neighbors of each vertex in $G$
are stored explicitly, one can compute a point-to-point shortest path
in $G$ using $A^*$ search in $O(n)$ expected time, i.e., independent of $r$
and sublinear in the size of the edge set of~$G$. In other words, they have given an
exact distance oracle for random weighted unit-disk graphs
that in expectation requires $O((nr)^2)$ space and $O(n)$ query time.

Sedgewick and Vitter's result~\cite{sedgewick1986} can be also interpreted as follows: for weighted
unit-disk graphs ${G=(V,E)}$, just storing the graph explicitly allows
$O(n)$-time queries for an average-case graph~$G$.
Whereas such a query time is sublinear in the graph size, the $\Theta((nr)^2)$
space used might be \emph{superlinear} in the graph's description -- observe
that a weighted unit-disk graph can be described using $O(n)$ space
solely with $n$ point locations and the connectivity radius $r$.
In recent years efficient single-source shortest paths algorithm
for weighted unit-disk graphs have been proposed~\cite{CabelloJ15, KaplanMRSS20, wangxue}, culminating
in the $O(n\log^2{n})$ algorithm of Wang and Xue~\cite{wangxue}.
Note that their worst-case bound is near-optimal and almost matches the bound of~\cite{sedgewick1986}
which holds only on average.
All-pairs shortest paths in weighted unit-disk graphs can be computed
slightly faster that running single-source computations~$n$ times~\cite{ChanS16}.
To the best of our knowledge, no exact distance oracle with non-obvious space and
query bounds for this graph class is known.
On the contrary, a very efficient $(1+\eps)$-approximate distance
oracle with near-optimal space, preprocessing, and query bounds was given by Chan and Skrepetos~\cite{ChanS19a}.

The notion of a weighted unit-disk graph naturally generalizes to three- and higher dimensions:
an edge between two vertices appears if the $d$-dimensional balls of radius $r$
at these points intersect. We are not aware of
any non-trivial results on computing shortest paths in such graphs for $d\geq 3$.

\subsection{Our results}
Observe that all of the above algorithms in order
to answer distance queries require work essentially proportional to the number of vertices
and not the number of edges. In this paper, we break this natural barrier and
show an even stronger sublinear time results.

We propose a natural heuristic approach to computing exact shortest paths in weighted unit-disk graphs.
Following Sedgewick and Vitter, we analyze its average-case query
time by studying its performance on a random $n$-vertex graph with connectivity radius $r$
in the unit square $[0,1]^2$, where $r=\Omega\left(\sqrt{\log (n)/n}\right)$.\footnote{This simplifying assumption
has also been made by Sedgewick and Vitter~\cite{sedgewick1986} and
excludes only very sparse graphs with $m=O(n\log{n})$ from our consideration. Moreover,
it is known that if $r=o(\sqrt{\log(n)/n})$, then the random unit-disk graph
is disconnected with high probability~\cite{825799}.}
In this setting, we prove that after near-linear preprocessing,
the query procedure of our average-case distance oracle
has $O(1/r^2+\sqrt{n})$ expected running time.
Formally, we prove:
\begin{theorem}
  Let $r\in (0,1)$ be such that $r=\Omega\left(\sqrt{\log(n)/n}\right)$. Let $G$ be a
  weighted unit-disk graph with connectivity radius $r$ on a set $P$ of $n$ points 
  picked uniformly at random from the unit square $[0,1]^2$.
  Let $\mathcal{D}$ be a data structure that, after preprocessing $P$ in $O(p(n))$ time,
  supports reporting all $k$ points in $P$
  lying in an arbitrary (not necessarily orthogonal) rectangular subregion
  in $O(k+t(n))$ time.
  Then, there exists an exact distance oracle on $G$ with $O(p(n))$ preprocessing time and
  $O(1/r^2+t(n))$ expected query time.
\end{theorem}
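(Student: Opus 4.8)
The plan is to exploit the fact underlying the Sedgewick--Vitter bound --- that in a random weighted unit-disk graph a shortest $s$--$t$ path hugs the straight segment $\overline{st}$ --- but to replace $A^*$'s vertex-by-vertex exploration by a much cheaper \emph{cell-by-cell} computation on a coarse grid, spending the range-reporting budget only on the few cells that can actually carry a shortest path. The preprocessing is simply to store $P$ in $\mathcal{D}$, costing $O(p(n))$. Given a query $(s,t)$, put $L=\|s-t\|_2$, translate and rotate the plane so that $\overline{st}$ is axis-parallel --- this rotation is exactly why we will need to query \emph{non-orthogonal} rectangles later --- and overlay a grid of axis-parallel squares of side $c\cdot r$, where $c$ is a small constant chosen so that every cell is a clique of $G$, the union of two cells at $\ell_\infty$-cell-distance at most $1$ is a clique, and cells that are far apart are non-adjacent. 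Since $r=\Omega(\sqrt{\log n/n})$, a Chernoff and union-bound argument shows that with probability $1-n^{-\Omega(1)}$ every grid cell contains $\Theta(nr^2)=\Omega(\log n)$ points of $P$; in particular no cell is empty, and there are $\Theta(1/r^2)$ cells in all. This is the only place the lower bound on $r$ enters, and it is what caps the number of cells the algorithm ever touches.

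Knowing that every cell is occupied, I would next build --- without any range queries --- a surrogate graph $H$ on the $\Theta(1/r^2)$ cells, joining cliquey cells by an edge of weight equal to the diameter of their union. Since $s\in\operatorname{cell}(s)$, $t\in\operatorname{cell}(t)$, and any choice of one (guaranteed-to-exist) vertex per cell along an $H$-path from $\operatorname{cell}(s)$ to $\operatorname{cell}(t)$ induces a genuine $G$-path from $s$ to $t$ of no greater length, $d_H(\operatorname{cell}(s),\operatorname{cell}(t))$ is an upper bound $B$ on $d_G(s,t)$; a bucketed Dijkstra computes it in $O(1/r^2)$ time because $H$ has bounded degree and all its weights lie within a constant factor of $r$. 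Every shortest $s$--$t$ path of $G$ must then lie in the ellipse $\{x:\|s-x\|+\|x-t\|\le B\}$; let $\mathcal{K}$ be the set of cells meeting that ellipse (optionally trimmed further by a symmetric Dijkstra on a minimum-inter-cell-distance variant of $H$, which supplies per-cell lower bounds on $d_G(s,\cdot)+d_G(\cdot,t)$).

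Now issue a single range query for the $\overline{st}$-aligned bounding rectangle of $\mathcal{K}$, retrieve $Q=P\cap\mathcal{K}$, and compute $d_G(s,t)$ exactly by running Dijkstra on the subgraph of $G$ induced by $Q\cup\{s,t\}$; correctness is immediate and the running time is $O\!\left(1/r^2+t(n)+|Q|+|E(G[Q])|\right)$. If in the rare event that some cell was empty the Dijkstra run fails to certify $B$, or if $\mathcal{K}$ turns out too large, I would re-run with a geometrically grown corridor and, as a last resort, fall back to the $O(n\polylog n)$ worst-case single-source algorithm of Wang--Xue, so that the oracle is always exact and only its expected cost is affected.

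The heart of the proof --- and the step I expect to be the real work --- is the average-case analysis showing that for a random point set one has $\mathbb{E}[|Q|]=O(1/r^2)$ and $\mathbb{E}[|E(G[Q])|]=O(1/r^2)$, i.e.\ that the relevant corridor is, in expectation, extraordinarily thin. The natural route is to bound in expectation the excess $d_G(s,t)-L$ and the maximal deviation of a shortest path from $\overline{st}$ by exhibiting a near-straight path that snaps to the nearest vertex of $P$ at equally spaced points of $\overline{st}$ (whose excess is easily controlled with high probability, using nearest-neighbour distances of order $n^{-1/2}$ and there being $O(L/r)$ snap points); the delicate point is to bound the contribution of the low-probability configurations in which a shortest path is forced to detour far from the segment, so that, weighted by their probabilities and absorbed by the geometric corridor growth and the worst-case fallback, they still add only $O(1/r^2)$ to the expected query time. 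The generalization to $d\ge 3$ is then mechanical: grid cubes replace squares, $1/r^2$ becomes $1/r^d$, and the same argument applies once $r$ is large enough that every cube is nonempty with high probability.
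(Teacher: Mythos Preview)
Your high-level architecture (thin corridor around $\overline{st}$, range query to extract the relevant points, run SSSP inside, geometrically widen and fall back on failure) matches the paper's, but the proposal has a genuine gap at the step you yourself flag as ``the heart of the proof.'' The upper bound $B$ you obtain from the diameter-weighted cell graph $H$ is only $B\le(1+\gamma)L$ for a fixed constant $\gamma>0$ (any cell path of $\Theta(L/r)$ hops accumulates $\Theta(L)$ slack), so the ellipse $\{x:\|s-x\|+\|x-t\|\le B\}$ has minor semi-axis $\Theta(L)$ and area $\Theta(L^2)$; hence $\mathbb{E}[|Q|]=\Theta(nL^2)$, which is $\Theta(n)$ when $L=\Theta(1)$, not $O(1/r^2)$. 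Your ``snap to the nearest point at $O(L/r)$ equispaced waypoints'' refinement does not rescue this: with nearest-neighbour distances $\delta=\Theta(n^{-1/2})$ the excess is at best $\Theta(L\delta/r)$, giving ellipse half-width $\Theta(L r^{-1/2}n^{-1/4})$ and $\mathbb{E}[|Q|]=\Theta(n^{3/4}L^2 r^{-1/2})$, still far from $O(1/r^2)$. What the paper actually proves (and what is needed) is that for $W_i=w\sqrt{1+\Theta((i/(nr^2))^2)}$ one has $\Pr[\text{no }s\text{--}t\text{ path of length}\le W_i]\le e^{-i}$; the first corridor already has half-width $\Theta(w/(nr^2))$ and holds only $\Theta((w/r)^2)$ points in expectation. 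This bound is obtained via a ``channel'' construction and an oriented-percolation contour argument (Section~6), which is the paper's main technical contribution and which your sketch does not approach.

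Two secondary issues. First, for $d=2$ you propose to run Dijkstra on $G[Q]$, paying $|E(G[Q])|$; even with the paper's thin corridor this is $\Theta((w/r)^3)$ in the worst range of $r$, not $O((w/r)^2)$. The paper avoids this by invoking the Wang--Xue near-linear SSSP algorithm, which depends only on $|V_i|$. Second, your final sentence (``$1/r^2$ becomes $1/r^d$ for $d\ge3$'') is not what the paper obtains: precisely because one must fall back to edge-enumerating Dijkstra in higher dimensions, the bound there is $O(1/r^{2d-1})$, not $O(1/r^d)$.
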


The state-of-the-art range searching data structures~\cite{Chan12}
imply that $t(n)=O(\sqrt{n})$ using $O(n)$ space and $O(n\log{n})$ preprocessing.
Consequently, for $r=\Omega(1/n^{1/4})$ the expected query time
is $O(\sqrt{n})$ and it remains truly sublinear
for all  $r=\Omega(\sqrt{\log(n)/n})$ -- improving the running time of Sedgewick and Vitter
in the full range of parameters $r$ they consider.

The general idea behind our heuristic algorithm for computing a shortest $s-t$ path
is fairly intuitive: we run the single-source shortest paths
algorithm limited to increasingly ``fat'' rectangular subregions of $G$  surrounding
the segment $s-t$.
The subregions of interest are computed using a range reporting data structure
which constitutes the only preprocessed information of our oracle.
Since dynamic variants of such range searching data structures are known~\cite{Matousek92} (with query and space
bound matched up to polylogarithmic factors, and polylogarithmic update bounds),
our heuristic distance oracle can be trivially dynamized as well (see Remark~\ref{dynamic}).

Another advantage of our algorithm is that it easily generalizes to higher dimensions. Using new ideas we prove
that for random weighted unit-disk graphs\footnote{Since a disk is a subset of a plane, in higher dimensions $d>2$, it would be perhaps more appropriate to call
such graphs \emph{weighted unit-ball graphs}. However, anyway, we stick to the well-established term \emph{weighted unit-disk graph}
since our main result concerns the plane case $d=2$.}
in~$[0,1]^d$,
the expected query time is $O(\min(1/r^{2d-1},n)+t_d(n))$,
assuming one can report the points from an arbitrary (not necessarily orthogonally aligned)
$d$-dimensional hyperrectangle in $O(t_d(n)+k)$ time.
It is known~\cite{Chan12} that $t_d(n)=O(n^{1-1/d})$
so this expected time is sublinear in $n$ unless $r=\Omega\left(n^{-\frac{1}{2d-1}}\right)$.
It is worth noting that for $d=2$, the expected query time has a ``better'' dependence, i.e., $O(1/r^d)$, on $r$
than for $d\geq 3$ where the dependence is $O(1/r^{2d-1})$.
This is justified by the fact that whereas single-source shortest paths in weighted
unit-disk graphs for $d=2$ can be computed nearly-optimally~\cite{wangxue}, no
non-trivial algorithm like this is known for $d\geq 3$ and we have to resort
to running the standard Dijkstra's algorithm.

We also show that the $O(t_d(n))=O(n^{1-1/d})$ term can be improved for $d\geq 3$
using a different ad-hoc algorithm for reporting the points
of $G$ in sufficiently thin hyperrectangles surrounding the segment
$st$ that we use.
Namely, we show that we can achieve $O(1/r^{2d-1}+n^{1/d})$ expected
query time for any $d\geq 3$.
This improved algorithm is discussed in Section~\ref{s:improved}
ans is also easily dynamized.

Undoubtedly, the technical difficulty of our result lies in the probabilistic
analysis. We use similar approach to the one used by Sedgewick and Vitter~\cite{sedgewick1986} to bound the probability
that the sought path exist in ellipsoidal grid-like regions called \emph{channels}.
However, in order to avoid looking at all the edges incident to a vertex we need to use a new heuristic that allows
us to consider only edges induced within an rectangular region.

Interestingly, we also identify a shortcoming in their original
analysis for the two-dimensional case and give a more delicate argument inspired
by the techniques from so-called \emph{oriented percolation theory}~(see e.g.,~\cite{durrett}). The original result of
Sedgewick and Vitter~\cite{sedgewick1986} wrongly limited the sets of directed paths
going through the channel grid. Thus the resulting probability that a path exists
was overestimated. The more detailed description of the shortcoming of the original
proof is given in Appendix~\ref{ashortcoming}.

We note that for $d=2$ the graph model considered here has been widely studied in the context
of wireless networks~\cite{825799}. For example, Gupta and Kumar~\cite{760846} studied the connectivity of such
networks, and have shown a critical $r$ above which the graph is connected with high probability.
This result was generalized by Penrose~\cite{1098-2418}
to $k$-connectivity. Our result gives the first known sublinear shortest path routing oracle
for such networks. In a sense, our results call for further work on exact distance oracles
for weighted unit disk graphs. In particular it might suggest that near-linear space and
sublinear query time exact distance oracles in the worst-case exist, as proving such result over
random graphs can be a seen as a proof-of-concept for such a possibility.

\newcommand{\ellipsoid}{\mathrm{BE}}
\newcommand{\bbox}{\mathrm{BB}}
\newcommand{\chlenub}{W_{\mathrm{ub}}}

\section{Preliminaries}

A \emph{weighted unit-disk graph} $G=(V,E)$ with \emph{connectivity radius} $r$ is an undirected geometric graph whose vertices are identified
with some $n$ points in $\mathbb{R}^d$, where $d\geq 2$ is a constant.
The edge set of $G$ contains an edge $\{u,v\}$ for all $u=(u_1,\ldots,u_d)$, $v=(v_1,\ldots,v_d)\in V$
such that $||u-v||_2=\sqrt{\sum_{i=1}^d (u_i-v_i)^2}\leq r$.
For brevity, in the following we omit the subscript and write $||x-y||$ instead of $||x-y||_2$.

For $u,v\in V$, by $\dist_G(u,v)$ we denote the length of a shortest $u\to v$ path
in~$G$.

We consider \emph{exact distance oracles} for weighted unit-disk graphs $G$,
i.e., data structures that preprocess $G$ (ideally into a near-linear space
data structure using near-linear time) and then accept point-to-point
distance queries, i.e., given query vertices $u,v\in V$, compute $\dist_G(u,v)$.
The algorithms we propose can be straightforwardly extended to also report
actual shortest paths within the same asymptotic time bound. Hence, we focus only on computing distances.

In order to perform a meaningful average-case analysis of a distance oracle's
query algorithm
on weighted unit-disk graphs for a given $r$, we need to
limit the space of possible graphs. To this end, following Sedgewick and Vitter~\cite{sedgewick1986},
for $r\in (0,1)$ we limit our attention to graphs with all
$n$ points in $[0,1]^d$.
In order to compute the average running time of a shortest path query,
we would like to compute it over all possible such graphs. 
Equivalently, we study the expected running time of a query
algorithm on a \emph{random graph} $G$, where each of $n$ points
is picked uniformly at random from $[0,1]^d$.
Note that in such a case, each vertex $w$ has $\Theta(nr^d)$ neighbors in expectation:
the probability that another vertex $z$ is connected with $w$ with
an edge equals the probability that $z$ is picked in the $d$-dimensional
ball of radius~$r$ around $v$ which clearly has volume $\Theta(r^d)$.

We also assume $r\geq \left(\frac{\beta\log{n}}{n}\right)^{1/d}$ for a sufficiently large constant $\beta>1$.
Then, the random graph $G$ has $\Omega(n\log{n})$ edges in expectation.
For $d=2$, the bound $r=\Omega\left(\left(\frac{\log{n}}{n}\right)^{1/d}\right)$ has also been assumed by Sedgewick and Vitter~\cite{sedgewick1986},
as it greatly simplifies calculations.
Moreover, for $r=o\left(\left(\frac{\log{n}}{n}\right)^{1/d}\right)$, with high probability $G$ is not connected~\cite{760846}.

\section{The distance oracle}\label{sec:oracle}

\subsection{Preprocessing}
Let the coordinates of the $n$ points of an input weighted unit-disk graph $G$ be given.
In the preprocessing phase, in $O(n\log{n})$ time
we build a simplex range searching data structure
on~$V$~\cite{Chan12}. This data structure requires only linear space
and allows $O(n^{1-1/d}+k)$ worst-case time queries
reporting all of the $k$ input points in an arbitrary hyperrectangle (with sides not necessarily
parallel to the axes) of $\mathbb{R}^d$.

\subsection{Query algorithm}
\label{sec:query_alg}
Suppose the query is to compute $\dist_G(s,t)$ for $s,t\in V$.
Let
\begin{equation*}
  w=||t-s||.
\end{equation*}
Clearly, we have $\dist_G(s,t)\geq w$. 
Moreover, in the following we assume $w>r$, since otherwise we trivially have
$\dist_G(s,t)=w$.

Let us first move and rotate the coordinate system so that the origin is now
in~$s$ and the direction of the first axis is the same as $\overrightarrow{st}$,
thus we have $s=(0,0,\ldots,0)$ and $t=(w,0,\ldots,0)$ in the new coordinate system.

\begin{observation}\label{obs-ellipse}
  Let $W\ge w$ denote an upper bound on $\dist(s,t)$.
  If a $s$--$t$ shortest path in~$G$ contains a vertex $x\in V$ then
  \begin{equation}
    \label{eq:ellipsoid}
    ||x-s||+||x-t||\le W.
  \end{equation}
\end{observation}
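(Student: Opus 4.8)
The plan is to argue directly from the triangle inequality applied to the shortest path. First I would fix a shortest $s$--$t$ path $\pi$ in $G$ and suppose it contains the vertex $x\in V$. Since $\pi$ passes through $x$, it decomposes into an $s$--$x$ subpath $\pi_1$ and an $x$--$t$ subpath $\pi_2$, both of which are themselves paths in $G$ (more precisely, $\pi_1$ is a shortest $s$--$x$ path and $\pi_2$ a shortest $x$--$t$ path, since $\pi$ is shortest). Hence $\dist_G(s,t)=\dist_G(s,x)+\dist_G(x,t)$.

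Next I would invoke the fact that graph distances in a weighted unit-disk graph dominate Euclidean distances: every edge $\{u,v\}$ has weight exactly $||u-v||$, so by the triangle inequality in $\mathbb{R}^d$ the length of any path from $a$ to $b$ in $G$ is at least $||a-b||$. Applying this to $\pi_1$ and $\pi_2$ gives $\dist_G(s,x)\ge ||x-s||$ and $\dist_G(x,t)\ge ||x-t||$. Combining with the decomposition above and the hypothesis $W\ge \dist_G(s,t)$ (here $W$ is an upper bound on $\dist(s,t)$, and note $\dist(s,t)=\dist_G(s,t)$), we get
\begin{equation*}
  ||x-s||+||x-t|| \le \dist_G(s,x)+\dist_G(x,t) = \dist_G(s,t) \le W,
\end{equation*}
which is exactly \eqref{eq:ellipsoid}.

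There is essentially no obstacle here: the statement is a standard observation that the search for a shortest $s$--$t$ path (given an upper bound $W$ on its length) may be confined to the ellipsoid with foci $s$ and $t$ and "string length" $W$. The only point requiring a word of care is that $\dist_G(s,t)=\dist_G(s,x)+\dist_G(x,t)$ uses optimality of $\pi$ (a shortest path restricted to a prefix/suffix is again shortest between its endpoints), and that the Euclidean lower bound on path length follows edge-by-edge from the definition of edge weights; both are immediate.
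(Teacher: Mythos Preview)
Your argument is correct and is exactly the intended one: decompose the shortest path at $x$, use that edge weights equal Euclidean distances so $\dist_G(a,b)\ge\|a-b\|$, and combine with the upper bound $W$. The paper in fact states this as an observation without proof, so your write-up simply spells out what the authors leave implicit.
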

Inequality \eqref{eq:ellipsoid} describes a set of points contained in a $d$-dimensional ellipsoid.
The first axis of that ellipsoid has length $W/2$, whereas all other $d-1$ axes have length $R$,
where $R$ satisfies $(w/2)^2 + R^2 = (W/2)^2$. Hence:
\begin{equation*}
  R=\frac12\sqrt{W^2-w^2}.
\end{equation*}

Note that the ellipsoid is contained in a $d$-dimensional \emph{bounding box}
\begin{equation}\label{eq:bb}
  \left[-\frac{W-w}2,\frac{W+w}2\right]\times [-R,R]\times \ldots \times [-R,R]
\end{equation}
with first side length equal to $W$ and the other $d-1$ side lengths equal to $2R$.

We will later pick an unbounded increasing function $\chlenub:\mathbb{Z}_{+}\to \mathbb{R}_{+}$
with values depending on $n,d,r$, with the goal of defining increasingly large bounding boxes, as follows.
\begin{definition}
  For a given integer $i\geq 1$, by $\ellipsoid(i)$ we denote the set
  of points satisfying inequality~\eqref{eq:ellipsoid} for  $W=\chlenub(i)$.
  Similarly, by $\bbox(i)$ we
  denote the bounding box as in formula~\eqref{eq:bb} for $W=\chlenub(i)$.
\end{definition}

\newcommand{\imax}{i_{\text{max}}}

Our entire algorithm will be to run a single-source shortest paths algorithm
on the graphs
\begin{equation*}
  G(i) = (V_i,E_i) = G\cap \bbox(i),
\end{equation*}
subsequently for $i=1,2,\ldots,\imax$ (where $\imax$ is to be set later)
until an $s\to t$
path of length no more than $\chlenub(i)$ is found.
If we are successful with that for some $i$, the found path is returned as the shortest $s\to t$ path.
Otherwise, we simply run Dijkstra's algorithm from $s$ on the entire $G$
and either return the found shortest $s\to t$ path,
or return $\infty$ if no path is found.

\begin{lemma}
  The above algorithm is correct.
\end{lemma}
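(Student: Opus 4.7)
The plan is to show that whenever the algorithm returns a value before the fallback Dijkstra call, the value it returns equals $\dist_G(s,t)$; the fallback itself is trivially correct because it is plain Dijkstra on the entire graph. Two subcases of the early-return behavior need to be handled: (i) the algorithm succeeds at some iteration $i\leq \imax$ in finding an $s\to t$ path in $G(i)$ of length at most $\chlenub(i)$, and (ii) the trivial case $w\leq r$ handled before the loop, where $\dist_G(s,t)=w$ holds because $\{s,t\}$ is then an edge of $G$.

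The core of the argument is case (i). Let $L$ be the length of the shortest $s\to t$ path found in $G(i)$, and assume $L\le \chlenub(i)$. First, $\dist_G(s,t)\le L$ is immediate since $G(i)=(V_i,E_i)$ is an induced subgraph of $G$ (the edge set $E_i$ consists of edges of $G$ between points in $V_i$, and the weights are inherited), so any $s\to t$ path in $G(i)$ is also one in $G$. For the matching lower bound, I would invoke Observation~\ref{obs-ellipse} with $W:=\chlenub(i)$, which is a valid upper bound on $\dist_G(s,t)$ because $\dist_G(s,t)\le L\le \chlenub(i)$. The observation then tells us that every vertex $x$ of \emph{any} true shortest $s\to t$ path $P^\ast$ in $G$ satisfies $\|x-s\|+\|x-t\|\le \chlenub(i)$, hence lies in $\ellipsoid(i)\subseteq \bbox(i)$. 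Consequently all vertices of $P^\ast$ belong to $V_i$, and since $G(i)$ is induced, all edges of $P^\ast$ are present in $E_i$ with identical weights. Thus $P^\ast$ is an $s\to t$ path in $G(i)$, giving $L\le \mathrm{length}(P^\ast)=\dist_G(s,t)$, and therefore $L=\dist_G(s,t)$.

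There is essentially no hard step; the only subtlety to be careful about is ensuring that $G(i)$ is the correct \emph{induced} subgraph on $V_i=V\cap \bbox(i)$, so that shortest-path distances within $G(i)$ use exactly the same edge weights as in $G$, and that the cutoff $L\le \chlenub(i)$ is checked \emph{before} declaring success — otherwise one could not legitimately apply Observation~\ref{obs-ellipse} with $W=\chlenub(i)$. The fallback branch where no such $i$ exists is correct by the correctness of Dijkstra, returning $\infty$ precisely when $s$ and $t$ lie in different connected components of $G$.
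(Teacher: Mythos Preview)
Your proof is correct and follows essentially the same approach as the paper: both argue that once $\dist_{G(i)}(s,t)\le\chlenub(i)$, one has $\dist_G(s,t)\le\chlenub(i)$, then invoke Observation~\ref{obs-ellipse} with $W=\chlenub(i)$ to conclude that a true shortest path lies inside $\bbox(i)$ and is therefore a path in $G(i)$. Your write-up is slightly more explicit about $G(i)$ being an induced subgraph and about why the cutoff test must precede the success declaration, but the logical skeleton is identical.
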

\begin{proof}
  The algorithm clearly stops. Moreover, the final
  Dijkstra step ensures that an $s\to t$ path is found
  if and only if a $s\to t$ path in $G$ exists.

  To prove correctness suppose that $\dist_G(s,t)<\infty$.
  Let $i^*$ be the first $i$ for which \linebreak ${\dist_{G(i^*)}(s,t)\leq \chlenub(i^*)}$,
  if such $i^*$ exists.
  Since $G(i^*)\subseteq G$, $\dist_G(s,t)\leq \dist_{G(i^*)}(s,t)$ and hence $\dist_G(s,t)\leq \chlenub(i^*)$.
  So, by Observation~\ref{obs-ellipse},
  a path of length $\dist_G(s,t)$ has all its vertices in $\ellipsoid(i^*)\subseteq \bbox(i^*)$.
  This proves $\dist_G(s,t)\geq \dist_{G(i^*)}(s,t)$, so in fact
  $\dist_G(s,t)=\dist_{G(i^*)}(s,t)$.

  If $i^*$ does not exists, we run Dijkstra's algorithm on the entire graph $G$,
  so clearly a shortest $s\to t$ path is returned.
\end{proof}

\newcommand{\tgen}{\text{T}_{\text{gen}}}

Let $\tgen^V(i)$ and $\tgen^E(i)$ be the times required to find
sets $V_i$ and $E_i$, respectively.
Since $V_i$ is defined as a subset of $V$ inside a $d$-dimensional
bounding box $\bbox(i)$, it can be clearly computed using a single query
to the preprocessed range searching data structure.
Hence,
\begin{equation*}
  \tgen^V(i)=O(n^{1-1/d}+|V_i|).
\end{equation*}
Denote by $T_d(i)$ the worst-case running time of step $i$.
The cost $T_d(i)$ might differ depending on the algorithm
that we use to find a shortest path in $G(i)$.
Note that $G(i)$ is a weighted unit-disk graph,
so if $d=2$, and we employ the recent nearly-linear (in the number of vertices), albeit difficult to implement,
algorithm of Wang and Xue~\cite{wangxue}, so we have:\footnote{We use $\log{(|V_i|+2)}$ instead of just
$\log{|V_i|}$ to make sure this term is at least a positive constant.}
\begin{equation}
  \label{eq:T_2}
  T_2(i)=O\left(|V_i|\log^2(|V_i|+2)+\tgen^V(i)\right)=O\left(|V_i|\log^2(|V_i|+2)+\sqrt{n}\right).
\end{equation}

On the other hand, if $d>2$, we need to use the simple-minded Dijkstra's algorithm
to find a shortest path in $G(i)$,
so we have
\begin{equation}
  \label{eq:T_d}
  T_d(i)=O\left(|V_i|\log(|V_i|+2)+|E_i|+\tgen^E(i)\right).
\end{equation}


Let $\bar{P}(i)$ be the probability that we fail
to find a path of length at most $\chlenub(i)$
in the graph $G_i$.
The expected running time of the algorithm is then
\begin{equation}\label{eq:runtime}
  O\left(\sum_{i=1}^{\imax} \bar{P}(i-1)\cdot \EX[T_d(i)]+\bar{P}(\imax)\cdot n^2\right).
\end{equation}

We will prove that by choosing
\begin{equation}\label{eq:imax-asymp}
  \imax=\Theta(nr^d),
\end{equation}
and
\begin{equation}\label{eq:chlenub-asymp}
\chlenub(i)=\Theta\left(w\cdot \sqrt{1+\left(\frac{i}{nr^d}\right)^{\frac{2}{d-1}}}\right)=O(w),
\end{equation}
as described precisely in Section~\ref{sec:explicit},
we can obtain the following key bound.
The proof of this bound is covered in Sections~\ref{s:channels} and~\ref{sec:explicit}.

\begin{lemma}
  \label{lemma:barP}
  For $i=1,\ldots,\imax$, $\bar{P}(i)\le e^{-i}.$
\end{lemma}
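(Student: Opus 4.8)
The plan is to bound $\bar{P}(i)$ — the probability that $G(i)$ does not contain an $s\to t$ path of length at most $\chlenub(i)$ — by exhibiting, inside the bounding box $\bbox(i)$, a family of grid-like ``channels'' of the kind used by Sedgewick and Vitter, and showing that with probability at least $1-e^{-i}$ at least one such channel is fully populated enough to carry a short $s\to t$ path. Concretely, I would partition the segment $st$ (of length $w$) into $\Theta(w/r)$ consecutive blocks along the first axis, each of length $\Theta(r)$, and stack $\Theta(i/(nr^d)\cdot\text{something})$ parallel cells transverse to $st$ within the ellipsoidal region $\ellipsoid(i)\subseteq\bbox(i)$; the width budget here is exactly the $R=\frac12\sqrt{\chlenub(i)^2-w^2}$ coming from Observation~\ref{obs-ellipse}, and the asymptotic choice~\eqref{eq:chlenub-asymp} is engineered so that $R$ scales like $w\cdot(i/(nr^d))^{1/(d-1)}$, giving $\Theta\big((i/(nr^d))^{1/(d-1)}\big)$ parallel lanes in each of the $d-1$ transverse dimensions, hence $\Theta(i/(nr^d))$ lanes total. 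Each cell has volume $\Theta(r^d)$, so a uniformly random point of $P$ lands in a fixed cell with probability $\Theta(r^d)$, and the expected number of points in a cell is $\Theta(nr^d/1)=\Theta(1)$ after absorbing constants — more carefully, I would make each cell have volume $c\cdot r^d$ for a constant $c$ chosen large enough that the expected occupancy exceeds a suitable threshold.

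The next step is a percolation-style argument. Declare a cell ``open'' if it contains at least one point of $P$; then consecutive open cells within a lane are joined by an edge of length $O(r)$ (since any two points in adjacent cells are within distance $r$, for an appropriate choice of cell geometry so that the diameter of the union of two adjacent cells is at most $r$). A left-to-right open path through the channel then yields an $s$–$t$ walk using $\Theta(w/r)$ edges each of length $O(r)$, i.e.\ total length $O(w)$, which by~\eqref{eq:chlenub-asymp} is at most $\chlenub(i)$ after fixing constants; connecting the endpoints to $s$ and $t$ themselves is handled because $s,t\in V$ and $w>r$ guarantees enough room, or by including $s,t$'s own cells. Each cell is open independently with probability $p=1-e^{-\Theta(1)}$, and by taking the constant in the cell volume large we can push $p$ above the critical threshold $p_c$ for oriented (directed) site percolation on the relevant lattice. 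The $\Theta(i/(nr^d))$ parallel lanes give $\Theta(i/(nr^d))$ ``independent-ish'' chances for a crossing — more precisely, I would bound the probability that \emph{no} lane crosses. Here is where the paper flags that the original Sedgewick–Vitter analysis is flawed: one cannot simply treat the lanes as independent single chains, because an oriented crossing path is allowed to drift between lanes, and conversely the event of \emph{no} crossing is the event of a ``dual'' blocking contour; the correct tool is the exponential decay of the connectivity (or the exponential cost of a blocking contour) in subcritical/supercritical oriented percolation, which yields that the probability of no crossing through a channel of width $W_\perp$ and length $L$ is at most $e^{-\Omega(\min(W_\perp,L))}$.

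Putting the pieces together: the width of the channel region $\bbox(i)$ in lattice units is $W_\perp=\Theta\big((i/(nr^d))^{1/(d-1)}\big)$ per transverse dimension, so the total cross-section is $\Theta(i/(nr^d))$ lanes, and the length is $L=\Theta(w/r)=\Omega(1/r)=\Omega\big((n/\log n)^{1/d}\big)$ by the assumption $r=\Omega((\log n/n)^{1/d})$; thus $L$ is large and not the bottleneck. The oriented-percolation crossing bound then gives $\bar{P}(i)\le e^{-\Omega(i/(nr^d))\cdot\min(1,\dots)}$ — but this is too weak by a factor of $nr^d$, so the actual argument must be sharper: I would instead note that within $\bbox(i)$ there are $\Theta(i)$ \emph{disjoint} sub-channels each of cross-section $\Theta(1)$ lane but of full length $L=\Omega(1/r)$, each crossing with probability $\ge 1-e^{-\Omega(L)}\ge 1-e^{-\Omega(1)}$ independently (disjoint point sets), and then either (a) for $i\le$ some threshold, $\Theta(1)$ such channels already give failure probability $e^{-\Omega(L)}\ll e^{-i}$, or (b) for larger $i$, use $\Theta(i)$ disjoint channels, each failing independently with probability $\le q<1$, so the probability all fail is $q^{\Theta(i)}\le e^{-i}$ once the constants are tuned; the precise bookkeeping balancing channel count, channel width, and the constant in $\chlenub(i)$ against the target $e^{-i}$ is exactly what Section~\ref{sec:explicit} carries out. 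The main obstacle, then, is the probabilistic heart: establishing the exponential crossing bound for oriented percolation on the channel lattice with explicit enough constants (and correctly handling the drift between lanes that sinks the naive argument), and then choosing $\chlenub(i)$, the cell size, and $\imax=\Theta(nr^d)$ so that these constants multiply out to give $\bar{P}(i)\le e^{-i}$ uniformly for all $i\le\imax$.
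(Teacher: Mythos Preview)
Your proposal has the right scaffolding (channels, cells, an oriented-percolation-style crossing bound) but misidentifies the mechanism that actually produces the $e^{-i}$ decay, and the arithmetic around cell occupancy and lane-counting is off.

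First, a concrete error: cells of volume $\Theta(r^d)$ have expected occupancy $\Theta(nr^d)$, which by the standing assumption $r\ge(\beta\log n/n)^{1/d}$ is at least $\beta\log n$, not $\Theta(1)$. Similarly, your count of ``$\Theta(i/(nr^d))$ lanes'' drops a factor of $(w/r)^{d-1}$: the transverse extent of $\bbox(i)$ is $R=\Theta\bigl(w\,(i/(nr^d))^{1/(d-1)}\bigr)$, so with lanes of width $\Theta(r)$ you get $\Theta\bigl((w/r)^{d-1}\, i/(nr^d)\bigr)$ lanes, and your later claim of ``$\Theta(i)$ disjoint sub-channels'' is nowhere derived and is in general false (take $r=\Theta(1)$).

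Second, and more importantly, the paper does \emph{not} obtain $e^{-i}$ by stacking many independent channels inside $\bbox(i)$. It uses a \emph{single} channel $\channel(h_i)$ whose box-height parameter $h_i$ grows with $i$: one sets $h_i$ so that the box volume satisfies $l h_i^{d-1}=\Theta(i/n)$, hence the single-box-empty probability is $q\le\exp(-n l h_i^{d-1})=\exp(-\Theta(i))$. The percolation/contour lemma then gives $\AntiP(h_i)\le(cq)^{2^{d-3}}\le e^{-i}$ directly. The reason this is the right coupling is the path-length bound: a path in $\channel(h)$ has length at most $w\sqrt{1+\Theta((h/r)^2)}$, so requiring this to be $\le\chlenub(i)$ forces exactly $h=h_i=\Theta\bigl((i/(nl))^{1/(d-1)}\bigr)$. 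Your proposal never links the channel geometry to the length constraint $\chlenub(i)$; with your fixed cells of side $\Theta(r)$, the certified path has length $\Theta(w)$ with a constant strictly bigger than $1$, which already exceeds $\chlenub(i)$ for small $i$ where $\chlenub(i)=w(1+o(1))$. So even if the probabilistic side were repaired, the paths you find would not witness $\dist_{G(i)}(s,t)\le\chlenub(i)$ for small $i$.

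In short: the exponent $i$ comes from making each box large enough that its empty-probability is $e^{-\Theta(i)}$, not from having $\Theta(i)$ independent tries; and the allowable box size is dictated by the path-length budget $\chlenub(i)$, a constraint your sketch omits.
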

We now derive bounds on the expected sizes of sets $V_i$ and $E_i$.
\begin{lemma}
  \label{lemma:Vub}
For $i=1,\ldots,\imax$,
  $\EX[|V_i|]=\Theta\left(
  (w/r)^di
  \right).$
\end{lemma}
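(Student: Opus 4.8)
I want to estimate $\EX[|V_i|]$, the expected number of input points of $G$ landing in the bounding box $\bbox(i)$. Since the $n$ points are drawn i.i.d. uniformly from $[0,1]^d$, linearity of expectation gives $\EX[|V_i|] = n \cdot \operatorname{vol}(\bbox(i) \cap [0,1]^d)$, so the whole task reduces to estimating the volume of the (clipped) bounding box. This is pure geometry followed by plugging in the asymptotic formula for $\chlenub(i)$ from~\eqref{eq:chlenub-asymp}.

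**Key steps.** First I would compute $\operatorname{vol}(\bbox(i))$ exactly from the definition~\eqref{eq:bb}: the box has one side of length $W = \chlenub(i)$ and $d-1$ sides of length $2R = \sqrt{W^2 - w^2}$, so $\operatorname{vol}(\bbox(i)) = W \cdot (W^2 - w^2)^{(d-1)/2}$. Second, I substitute the asymptotic value $W = \chlenub(i) = \Theta\!\left(w\sqrt{1 + (i/(nr^d))^{2/(d-1)}}\right)$. Writing $\lambda = (i/(nr^d))^{2/(d-1)}$, we get $W = \Theta(w\sqrt{1+\lambda})$ and $W^2 - w^2 = \Theta(w^2 \lambda)$ — here I should note that the additive constant hidden in the $\Theta$ for $\chlenub$ must be chosen so that $\chlenub(i)^2 - w^2 = \Theta(w^2\lambda)$ holds with matching lower and upper bounds (this is guaranteed by the precise definition deferred to Section~\ref{sec:explicit}, and for $i\le\imax$ we have $\lambda = O(1)$ by~\eqref{eq:imax-asymp}). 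Then $(W^2-w^2)^{(d-1)/2} = \Theta(w^{d-1}\lambda^{(d-1)/2}) = \Theta(w^{d-1} \cdot i/(nr^d))$, and multiplying by $W = \Theta(w)$ gives $\operatorname{vol}(\bbox(i)) = \Theta(w^d \cdot i/(nr^d)) = \Theta((w/r)^d \cdot i / n)$. Third, I clip to the unit square: since $\operatorname{vol}(\bbox(i) \cap [0,1]^d) \le \operatorname{vol}(\bbox(i))$ the upper bound is immediate; for the lower bound I argue that a constant fraction of $\bbox(i)$ stays inside $[0,1]^d$ — both $s$ and $t$ lie in the unit square and $w = \|t-s\| \le \sqrt{d}$, the box has all side lengths $O(w) = O(1)$, and it is centered near the segment $st$, so a fixed-fraction sub-box around the midpoint survives the clipping. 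Multiplying the volume bound by $n$ gives $\EX[|V_i|] = \Theta((w/r)^d i)$.

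**Main obstacle.** The routine parts are the volume computation and the substitution; the one place requiring care is the lower bound after clipping against $[0,1]^d$. The box $\bbox(i)$ can stick out of the unit square, e.g. when $s$ and $t$ are near a corner or an edge, and in principle a large portion of it could be cut away. I would handle this by observing that the portion of $\bbox(i)$ within, say, Euclidean distance $w/2$ of the midpoint of $st$ — which has volume a constant fraction of $\operatorname{vol}(\bbox(i))$ since all side lengths are $\Theta(w)$ — is entirely contained in the ball of radius $w$ around that midpoint, hence in $[0,1]^d$ expanded by $O(1)$; a cleaner route is to restrict to a concentric scaled copy of the box that is guaranteed to lie in $[0,1]^d$ whenever the endpoints do, using that $w\le\sqrt d$. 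Either way, it only affects constant factors, which is all that is needed for a $\Theta(\cdot)$ statement. I also need the hidden constant in $\chlenub(i)$ to give a genuine two-sided bound on $W^2-w^2$, but this is exactly what the explicit choice in Section~\ref{sec:explicit} provides, so I will simply invoke it.
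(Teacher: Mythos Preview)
Your approach is essentially the same as the paper's: compute $\operatorname{vol}(\bbox(i)) = W\cdot(2R)^{d-1}$, plug in $W=\Theta(w)$ and $R^{d-1}=\Theta(w^{d-1}i/(nr^d))$ from~\eqref{eq:chlenub-asymp}, and multiply by $n$. The paper's proof is in fact terser than yours---it simply writes ``$\EX[|V_i|]$ equals the volume of $\bbox(i)$ times $n$'' without addressing the clipping to $[0,1]^d$ at all---so your extra discussion of the intersection is more careful than what the paper does, though (as you note) the lower-bound side of that argument is still somewhat informal.
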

\begin{proof}
  Clearly, $\EX[|V_i|]$ equals the volume of $\bbox(i)$ times $n$.
  For $W=\chlenub(i)$ we have
  \begin{equation}\label{eq:r}
  R=\frac{1}{2}\sqrt{W^2-w^2}=\Theta\left(w\cdot \left(\frac{i}{nr^d}\right)^{\frac{1}{d-1}}\right).
  \end{equation}
  Since $\bbox(i)$ has size $W\times 2R\times \ldots \times 2R$, its volume is
  \begin{equation*}
    W\cdot (2R)^{d-1}=\Theta(w)\cdot \Theta(R^{d-1})=\Theta(w)\cdot \Theta\left(\frac{w^{d-1}i}{nr^d}\right)=\Theta\left(\frac{1}{n}\cdot \left(\frac{w}{r}\right)^d\cdot i\right).\qedhere
  \end{equation*}
\end{proof}

In order to analyse the running time we will need the following technical lemma.

\begin{lemma}\label{l:chernoff}
  Let $X$ be a random variable from a binomial distribution with $n$ variables and mean $\EX[X]=\mu=\Omega(1)$.
  Then for any constant integer $\alpha\geq 1$:
  \begin{equation*}
    \EX[X\cdot \log^\alpha (X+2)]=O(\EX[X]\cdot \log^\alpha (\EX[X]+2)))= O(\mu\cdot \log^\alpha (\mu+2))).
  \end{equation*}
\end{lemma}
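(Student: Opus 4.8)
The plan is to split the expectation according to whether $X$ is close to its mean $\mu$ or exceeds it by a large multiplicative factor, and to control the tail using a standard Chernoff bound for the binomial distribution. First I would fix a threshold, say $M = 2\mu$, and write
\[
  \EX[X\cdot \log^\alpha(X+2)] = \EX\big[X\log^\alpha(X+2)\cdot \mathbf{1}_{X\le M}\big] + \sum_{j\ge 1} \EX\big[X\log^\alpha(X+2)\cdot \mathbf{1}_{2^j\mu < X \le 2^{j+1}\mu}\big].
\]
For the first term, monotonicity of $x\mapsto x\log^\alpha(x+2)$ on $x\ge 0$ gives the bound $M\log^\alpha(M+2) = O(\mu\log^\alpha(\mu+2))$ directly, using $\mu=\Omega(1)$ so that $\log(\mu+2)=\Theta(\log(2\mu+2))$. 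This already accounts for the ``typical'' contribution and matches the claimed right-hand side.

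For the tail terms I would use the multiplicative Chernoff bound: for a binomial $X$ with mean $\mu$ and any $\delta\ge 1$, $\Pr[X > (1+\delta)\mu] \le e^{-\delta\mu/3}$ (or any comparable form). On the event $2^j\mu < X \le 2^{j+1}\mu$ we have $X\log^\alpha(X+2) = O\big(2^{j}\mu\cdot (j + \log(\mu+2))^\alpha\big)$, and this event has probability at most $\Pr[X > 2^j\mu] \le e^{-(2^j-1)\mu/3} \le e^{-2^{j-1}\mu/3}$ (for $j\ge 1$). Hence the $j$-th tail term is at most
\[
  O\!\left(2^{j}\mu\,(j+\log(\mu+2))^\alpha\, e^{-2^{j-1}\mu/3}\right).
\]
Since $\mu=\Omega(1)$, the super-exponentially decaying factor $e^{-2^{j-1}\mu/3}$ dominates the polynomial-in-$j$ and linear-in-$2^j$ growth, so the sum over $j\ge 1$ converges and is in fact $O(\mu\log^\alpha(\mu+2))$ — indeed each term is bounded by $O(\mu\log^\alpha(\mu+2))$ times a convergent geometric-type factor, because $(j+\log(\mu+2))^\alpha \le 2^{\alpha-1}(j^\alpha + \log^\alpha(\mu+2))$ and $\sum_j 2^j j^\alpha e^{-2^{j-1}\mu/3} = O(1)$ for $\mu$ bounded below by a constant. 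Adding the two contributions and using $\log(\mu+2)=\Theta(\log(\mu+2))$ absorbing the constant $2$ in $M=2\mu$ yields $\EX[X\log^\alpha(X+2)] = O(\mu\log^\alpha(\mu+2))$, which is the first stated equality; the second follows since $\EX[X]=\mu$ by definition.

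The main obstacle is nothing deep but rather bookkeeping: one must be careful that the constant $\beta$ in $\mu=\Omega(1)$ (equivalently, that $\mu$ exceeds an absolute constant) is large enough, or else invoke the Chernoff bound in a form valid for all $\mu=\Omega(1)$, so that the geometric-type series $\sum_{j\ge 1} 2^j j^\alpha e^{-2^{j-1}\mu/3}$ is genuinely $O(1)$ with a constant independent of $\mu$; if $\mu$ could be an arbitrarily small constant this would still hold but the constant in the $O(\cdot)$ would depend on the lower bound for $\mu$, which is fine for our purposes. The only other point requiring a line of care is checking that $\alpha$ being a fixed constant lets us pull $(j+\log(\mu+2))^\alpha$ apart additively without affecting asymptotics.
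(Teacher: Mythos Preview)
Your proposal is correct and follows essentially the same strategy as the paper's proof: split the expectation into a near-mean part and tail parts, bound the tail using a multiplicative Chernoff bound, and control the resulting series using $(x+y)^\alpha \le 2^{\alpha-1}(x^\alpha+y^\alpha)$ together with $\mu=\Omega(1)$. The only cosmetic difference is that the paper slices the tail into linear intervals $[(j-1)\mu, j\mu)$ and uses the Chernoff form $\Pr[X\ge (1+\delta)\mu]\le e^{-\delta^2\mu/(2+\delta)}$, whereas you use dyadic intervals $[2^j\mu, 2^{j+1}\mu]$ and the form $e^{-\delta\mu/3}$; both lead to the same conclusion with the same bookkeeping caveats you already flagged.
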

\begin{proof}
  By using a Chernoff bound $P[X\geq (1+\delta)\mu]\leq e^{\frac{-\delta^2\mu}{2+\delta}}$, and the inequality\linebreak
  $(x+y)^\alpha\leq 2^{\alpha-1}(x^\alpha+y^\alpha)$ that holds for any $x,y>0$ $\alpha\in \mathbb{Z}_+$,
  we get:
  \begin{align*}
    \EX[X\cdot \log^\alpha (X+2)] &= \sum_{j=1}^{n} j \log^\alpha (j+2) \cdot \Pr[X=j]\\
                        &\le \sum_{j=1}^{\lceil n/\mu \rceil} P[(j-1) \mu \le X < j\mu ] \cdot j\log^\alpha (j\mu+2)\\
                        &\le \left(\sum_{j={\lceil 2/\mu\rceil}}^{\infty}P[X\geq (j-1)\mu] \cdot j\mu \log^\alpha (j(\mu+2))\right)+O(\mu)\\
                        &\le \left(\sum_{j=\lceil 2/\mu\rceil-2}^{\infty}P[X\geq (1+j)\mu] \cdot (j+2)\mu \log^\alpha ((j+2)(\mu+2))\right)+O(\mu)\\
                        &=O\left(\sum_{j=1}^{\infty}P[X\geq (1+j)\mu] \cdot j\mu \log^\alpha (j(\mu+2))\right)+O(\mu\log^{\alpha}(\mu+2))\\
                        &=O\left(\sum_{j=1}^{\infty}e^{-\frac{j^2\mu}{2+j}} \cdot j\mu \log^\alpha (j(\mu+2))\right)+O(\mu\log^{\alpha}(\mu+2))\\\
                        &=O\left(\sum_{j=1}^{\infty}e^{-\frac{j^2\mu}{2+j}} \cdot j\mu \cdot 2^{\alpha-1}(\log^\alpha (j)+\log^\alpha(\mu+2))\right)+O(\mu\log^{\alpha}(\mu+2))\\
                        &=O\left(\mu \log^\alpha{(\mu+2)}\cdot \sum_{j=1}^{\infty}e^{-\frac{j^2\mu}{2+j}} \cdot j\log^\alpha (j)\right)+O(\mu\log^{\alpha}(\mu+2))\\
                        &=O\left(\mu \log^\alpha{(\mu+2)}\cdot \sum_{j=1}^{\infty}\left(e^\mu\right)^{-j} \cdot j\log^\alpha (j)\right)+O(\mu\log^{\alpha}(\mu+2))\\
                        &=O(\mu\log^\alpha(\mu+2)).
  \end{align*}
  The final step is justified by $\sum_{j=1}^\infty c^{-j}\cdot \poly(j)=O(1)$ for any $c>1$ and $e^\mu=\Omega(1)$.
\end{proof}

\begin{corollary}\label{cor:logs}
  For any integer $\alpha\geq 1$ we have
  \begin{equation*}
    \EX[|V_i|\log^{\alpha}(|V_i|+2)] =O(\EX[|V_i|]\cdot \log^{\alpha}(\EX[|V_i|]+2)).
  \end{equation*}
\end{corollary}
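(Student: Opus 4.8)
The plan is to apply Lemma~\ref{l:chernoff} directly, since $|V_i|$ is already in the form required by that lemma. First I would observe that each of the $n$ points of $G$ is placed independently and uniformly at random in $[0,1]^d$, so $|V_i| = |V \cap \bbox(i)|$ is a sum of $n$ independent indicator variables — the $j$-th indicator being $1$ exactly when the $j$-th point falls inside $\bbox(i)$. Hence $|V_i|$ is a binomial random variable with $n$ trials and mean $\mu = \EX[|V_i|]$, which equals $n$ times the volume of $\bbox(i)$ (intersected with $[0,1]^d$, but the volume bound from Lemma~\ref{lemma:Vub} is an upper bound regardless). By Lemma~\ref{lemma:Vub}, $\EX[|V_i|] = \Theta((w/r)^d i)$, and since $w > r$ and $i \ge 1$, this is $\Omega(1)$, so the hypothesis $\mu = \Omega(1)$ of Lemma~\ref{l:chernoff} is satisfied.

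With these observations in place, Lemma~\ref{l:chernoff} applied to $X = |V_i|$ with the same $\alpha$ gives immediately
\begin{equation*}
  \EX[|V_i| \cdot \log^{\alpha}(|V_i|+2)] = O\bigl(\EX[|V_i|] \cdot \log^{\alpha}(\EX[|V_i|]+2)\bigr),
\end{equation*}
which is exactly the claimed statement. So the corollary is essentially just an instantiation of the technical lemma on the concrete random variable $|V_i|$.

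The only point that requires a moment of care — and the closest thing to an obstacle here — is verifying that $|V_i|$ genuinely has a binomial distribution, i.e.\ that the region $\bbox(i)$ defining $V_i$ does not itself depend on the random point placement in a way that would break independence. But $\bbox(i)$ is determined solely by $s$, $t$, $r$, $n$, $d$ and the index $i$ through the deterministic function $\chlenub$; conditioned on the query pair $(s,t)$, it is a fixed region, so the membership indicators of the remaining $n-2$ points are indeed i.i.d.\ Bernoulli, and adding the (deterministic, given the query) contribution of $s$ and $t$ themselves changes $|V_i|$ by at most $2$, which is absorbed into the asymptotics. Thus no real difficulty arises and the corollary follows.
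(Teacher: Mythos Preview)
Your proposal is correct and follows exactly the same approach as the paper: apply Lemma~\ref{l:chernoff} to $X=|V_i|$, using Lemma~\ref{lemma:Vub} (together with $w>r$ and $i\ge 1$) to verify the hypothesis $\EX[|V_i|]=\Omega(1)$. You are actually more careful than the paper's one-line proof, since you explicitly check that $|V_i|$ is (up to the two fixed query points $s,t$) a binomial random variable and that $\bbox(i)$ is deterministic given the query.
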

\begin{proof}
  We can apply Lemma~\ref{l:chernoff} since $\EX[|V_i|]=\Omega(1)$ by Lemma~\ref{lemma:Vub}.
\end{proof}

\newcommand{\efun}{f^\text{E}}

\begin{lemma}
  \label{lemma:Eub}
  Let
    \begin{equation*}
      \efun(i)=\min\{nr^d, (w/r)^{d-1}i\}=\begin{cases}
      (w/r)^{d-1}i\quad&\text{for }r\ge \sqrt[2d-1]{w^{d-1}i/n}\\
      nr^d&\text{otherwise.}
    \end{cases}
    \end{equation*}
  Then for $i=1,\ldots,\imax$, $\EX[|E_i|]=\EX[|V_i|]\cdot O(\efun(i))$.

\end{lemma}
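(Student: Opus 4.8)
The plan is to compute $\EX[|E_i|]$ by linearity of expectation over \emph{ordered pairs} of input points, which reduces the whole estimate to a single geometric volume bound. Since a rigid motion was applied to the coordinates, the $n$ points are still placed independently and uniformly in a translated and rotated unit cube $Q$ with $\mathrm{vol}(Q)=1$. Writing $2|E_i|$ as the sum, over all ordered pairs of distinct points of $G$, of the indicator of the event ``both endpoints lie in $\bbox(i)$ and are within distance $r$'', and conditioning on the location of the first endpoint $u$, the probability of this event for a fixed pair is
\[
\EX_{u}\big[\mathbf{1}[u\in\bbox(i)]\cdot\mathrm{vol}(B(u,r)\cap\bbox(i)\cap Q)\big]\ \le\ \mathrm{vol}(\bbox(i)\cap Q)\cdot\max_{u}\mathrm{vol}(B(u,r)\cap\bbox(i)),
\]
where $B(u,r)$ denotes the ball of radius $r$ around $u$. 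Summing over the at most $n^2$ ordered pairs and using $\EX[|V_i|]=n\cdot\mathrm{vol}(\bbox(i)\cap Q)$ (from the proof of Lemma~\ref{lemma:Vub}) gives
\[
\EX[|E_i|]\ \le\ \tfrac12 n^2\,\mathrm{vol}(\bbox(i)\cap Q)\cdot\max_{u}\mathrm{vol}(B(u,r)\cap\bbox(i))\ =\ \EX[|V_i|]\cdot O\big(n\cdot\max_{u}\mathrm{vol}(B(u,r)\cap\bbox(i))\big).
\]

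Second, I would bound $\mathrm{vol}(B(u,r)\cap\bbox(i))$ geometrically. By~\eqref{eq:bb}, $\bbox(i)$ has one side of length $\chlenub(i)=\Theta(w)$ and $d-1$ sides of length $2R$. Intersecting with a radius-$r$ ball confines the first coordinate to an interval of length at most $2r$ and each of the remaining $d-1$ coordinates to an interval of length at most $2\min\{r,R\}$, so $\mathrm{vol}(B(u,r)\cap\bbox(i))=O\big(r\cdot\min\{r,R\}^{\,d-1}\big)=O\big(\min\{r^{d},\,rR^{d-1}\}\big)$. Substituting $R=\Theta\big(w\,(i/(nr^{d}))^{1/(d-1)}\big)$ from~\eqref{eq:r} yields $R^{d-1}=\Theta(w^{d-1}i/(nr^{d}))$, hence $n\cdot rR^{d-1}=\Theta((w/r)^{d-1}i)$ whereas $n\cdot r^{d}=nr^{d}$, so that
\[
n\cdot\max_{u}\mathrm{vol}(B(u,r)\cap\bbox(i))\ =\ O\big(\min\{nr^{d},\,(w/r)^{d-1}i\}\big)\ =\ O(\efun(i)).
\]
Combining this with the previous display proves $\EX[|E_i|]=\EX[|V_i|]\cdot O(\efun(i))$; note that the case split in the definition of $\efun$ is precisely the comparison of $nr^{d}$ with $(w/r)^{d-1}i$, equivalently of $r$ with $R$.

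The only mildly delicate point is the geometric estimate of $\mathrm{vol}(B(u,r)\cap\bbox(i))$: when $R<r$ the box is \emph{thinner} than the ball in the last $d-1$ directions, and it is the quantity $rR^{d-1}$ rather than $r^{d}$ that governs the intersection volume; one then has to verify that after plugging in the value of $R$ the bound $O(n\min\{r^{d},rR^{d-1}\})$ matches $O(\efun(i))$ term by term. Everything else is routine linearity of expectation, crucially using that the $n$ points are independent so that the conditioning step above is immediate.
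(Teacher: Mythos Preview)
Your proof is correct and follows essentially the same approach as the paper: both reduce the estimate to bounding the volume of $B(u,r)\cap\bbox(i)$ by $O(\min\{r^d,\,rR^{d-1}\})$ and then plug in the value of $R$ from~\eqref{eq:r}. The only difference is presentational---you set things up as a sum over ordered pairs and invoke independence explicitly, whereas the paper argues more informally by fixing $v\in V_i$ and bounding the expected number of its neighbours before appealing to linearity of expectation.
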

\begin{proof}
  Take a vertex $v\in V_i$.
  All neighbours of $v$ in $G(i)$ belong to the intersection of the $d$-dimensional ball of
  radius $r$ centered at $v$, and the bounding box $\bbox(i)$.
  This intersection, on one hand,
  is contained in a box of size $2r\times2R\times\cdots\times 2R$,
  where $R=\frac{1}{2}\sqrt{(\chlenub(i))^2-w^2}$ (see \eqref{eq:r}).
  On the other hand, it is trivially inside a ball of radius $r$.
  In the former case the volume of the box with $v$'s neighbours is
  \begin{equation*}
    O(rR^{d-1})=O\left(\frac{1}{n}(w/r)^{d-1}\cdot i\right)
  \end{equation*}  In the latter case the volume is $O(r^d)$.
  Therefore, the expected number of neighbours of $v$ is
  \begin{equation}
    \label{eq:nb_v}
    O\left(n\cdot \min\left\{\frac{1}{n}(w/r)^{d-1}\cdot i, r^d\right\}\right)=O(\min\{(w/r)^{d-1}\cdot i, nr^d\}).
  \end{equation}
  By linearity of expectation we get the desired bound on $\EX[|E_i|]$.
\end{proof}

The following lemma describes how to efficiently generate the edges $E_i$
when we use Dijkstra's algorithm (for $d\geq 3$).
\begin{lemma}
  \label{lemma:tgenE}
  Let $\efun$ be as in Lemma~\ref{lemma:Eub}.
  Given $V_i$,
  the edge set $E_i$ can be computed in
  $\tgen^E(i)=O(\EX[|V_i|]\cdot \efun(i))$ expected time.
\end{lemma}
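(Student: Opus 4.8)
The plan is to avoid range searching for this step altogether and instead exploit that the edges of a unit-disk graph are short. First I would overlay on $\mathbb{R}^d$ a regular axis-aligned grid of (half-open) cubes of side $r$, assign to each $p\in V_i$ the integer index of the cube containing it, and bucket $V_i$ accordingly using a hash table keyed by the cube index, which costs $O(|V_i|)$ expected time. Then, for every $p\in V_i$, I would examine the $3^d=O(1)$ cubes whose index differs from that of $p$'s cube by at most one in each coordinate, and for each point $q\neq p$ found in those cubes test whether $\|p-q\|\le r$, recording the edge $\{p,q\}$ if so. This is correct: if $q$ is a neighbour of $p$ in $G(i)$ then $|p_j-q_j|\le\|p-q\|\le r$ for every coordinate $j$, hence $q$'s cube is one of the $O(1)$ cubes scanned from $p$. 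Letting $n_C$ denote the number of points of $V_i$ inside a cube $C$, and writing $C'\sim C$ when two cubes are equal or adjacent, the total running time is $O\big(|V_i|+\sum_C n_C\sum_{C'\sim C}n_{C'}\big)$; since $n_Cn_{C'}\le\tfrac12(n_C^2+n_{C'}^2)$, the symmetry of $\sim$, and the fact that each cube has $O(1)$ cubes adjacent to it together give $\sum_C n_C\sum_{C'\sim C}n_{C'}=O\big(\sum_C n_C^2\big)$, where the sum ranges over the finitely many cubes meeting $\bbox(i)$.

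It then remains to bound $\EX\!\big[\sum_C n_C^2\big]$. The crucial ingredient is a \emph{uniform} volume estimate: for every cube $C$, the region $C\cap\bbox(i)$ has extent at most $r$ along the first axis and at most $\min\{r,2R\}\le 2R$ along each of the other $d-1$ axes, so $\operatorname{vol}(C\cap\bbox(i))\le\min\{r^d,\,2^{d-1}rR^{d-1}\}$, which by \eqref{eq:r} equals $\tfrac1n\cdot O\big(\min\{nr^d,(w/r)^{d-1}i\}\big)=\tfrac1n\cdot O(\efun(i))$ --- the same computation as in the proof of Lemma~\ref{lemma:Eub}. Consequently $n_C$ has a binomial distribution with mean $\mu_C=n\operatorname{vol}(C\cap\bbox(i))=O(\efun(i))$, and since $\efun(i)\ge1$ (the assumption $r\ge(\beta\log n/n)^{1/d}$ gives $nr^d\ge1$, while $w>r$ and $i\ge1$ give $(w/r)^{d-1}i\ge1$), we get $\EX[n_C^2]=\operatorname{Var}(n_C)+\mu_C^2\le\mu_C+\mu_C^2=O(\efun(i))\cdot\mu_C$. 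Summing over the cubes and using $\sum_C\mu_C=\EX\big[\sum_C n_C\big]=\EX[|V_i|]$ yields $\EX\!\big[\sum_C n_C^2\big]=O(\efun(i))\cdot\EX[|V_i|]$, whence the expected running time is $O(\EX[|V_i|])+O(\efun(i)\,\EX[|V_i|])=O(\efun(i)\,\EX[|V_i|])$, using $\efun(i)\ge1$ once more.

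The only genuinely delicate point I anticipate is the uniform volume estimate: one must verify that the clipped bound $\min\{r^d,2^{d-1}rR^{d-1}\}$ collapses to $O(\efun(i)/n)$ in \emph{both} regimes of the definition of $\efun$, which holds because $nrR^{d-1}=\Theta\big((w/r)^{d-1}i\big)$ by \eqref{eq:r} and $\min\{A,cB\}\le c\min\{A,B\}$ for every $c\ge1$. Two minor remarks. The multinomial negative correlation among the counts $n_C$ only helps here, so nothing beyond the per-cube binomial second moment is needed. And if one prefers a deterministic bucketing step, sorting the points of $V_i$ by cube index gives the same bound up to an additive $O(|V_i|\log(|V_i|+2))$ term, which is in any case dominated by the $|V_i|\log(|V_i|+2)$ term already present in $T_d(i)$ in \eqref{eq:T_d}.
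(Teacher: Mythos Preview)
Your proof is correct and follows essentially the same approach as the paper: both partition space into side-$r$ cubes, bucket $V_i$ via a hash table, and scan the $3^d$ neighbouring cubes, using the volume bound $\operatorname{vol}(C\cap\bbox(i))=O(\min\{r^d,rR^{d-1}\})=O(\efun(i)/n)$ to control the work per cube. The only difference is cosmetic: where the paper argues informally that the expected work per vertex is $O(\efun(i))$ and invokes linearity, you make this explicit via the second moment $\EX[n_C^2]\le\mu_C+\mu_C^2=O(\efun(i))\mu_C$ and $\sum_C\mu_C=\EX[|V_i|]$, which is a cleaner justification of the same bound.
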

\begin{proof}
  We divide $[0,1]^d$ into cubes of size $r\times r\times\cdots\times r$.
  With each non-empty cube we will keep a list of vertices from $V_i$
  that belongs to that cube.
  We build these lists by iterating over all $v\in V_i$ and assign $v$
  to the appropriate cube's list.
  Technically speaking, the lists are stored in a
  hash table with expected $O(1)$ insertion and access time (see e.g.,~\cite{DietzfelbingerH90}):
  note that the cubes can be mapped to integers $[1,(\lceil 1/r\rceil)^d]$
  and we have $(\lceil 1/r\rceil)^d=O(n)$ by $r=\Omega\left((\log(n)/n)^{1/d}\right)$.
  To find the edges,
  for each $v$ we iterate over all vertices $w$ belonging
  to the same cube as $v$ or a neighbouring cube
  and check whether $||v-w||\le r$.
  There are at most $3^d$ such cubes and each neighbor
  of $v$ necessarily lies in these neighboring cubes.

  Each cube contains $O(n\min\{rR^{d-1}, r^d\})$ vertices in expectation,
  where we again set $R=\frac{1}{2}\sqrt{(\chlenub(i))^2-w^2}$ (see \eqref{eq:r}).
  Recall from~\eqref{eq:nb_v} in Lemma \ref{lemma:Eub} that
  this quantity is $O(\efun(i))$.
  This is because if $2R<r$ then the cube's intersection with $\bbox(i)$
  has size at most $r\times(2R)\times\cdots\times(2R)$ and
  only in that part of the cube the vertices from $V_i$ can appear.
  Therefore, the expected total work for each vertex will be $O(3^d\cdot \efun(i))=O(\efun(i))$.
  Thus, by linearity of expectation, the expected running time is indeed $O(\EX[|V_i|]\cdot f^E(i))$.
\end{proof}

We are now ready to prove the following theorem bounding the expected
running time of the query algorithm.
\begin{theorem}\label{t:query-general}
  The expected running time of the query algorithm on
  an $n$-vertex random weighted unit-disk graph in $[0,1]^d$
  with connectivity radius $r$ is
  \begin{enumerate}[(a)]
    \item\label{item:d2}
      $O\left((w/r)^2\log^2(1+w/r)+\sqrt{n}\right)$ for $d=2$,
    \item\label{item:big_r}
      $O\bigl((w/r)^{2d-1}+n^{1-1/d}\bigr)$ for $d\ge3$ and $r\ge\sqrt[2d-1]{w^{d-1}/n}$,
    \item\label{item:small_r}
      $O(nw^d+n^{1-1/d})$ otherwise.
  \end{enumerate}
\end{theorem}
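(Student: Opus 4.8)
The plan is to substitute the pieces assembled above into the expected running time bound~\eqref{eq:runtime}, namely $O\bigl(\sum_{i=1}^{\imax}\bar P(i-1)\,\EX[T_d(i)]+\bar P(\imax)\,n^2\bigr)$, and to bound the main sum and the tail term separately. The tail is easy: Lemma~\ref{lemma:barP} gives $\bar P(\imax)\le e^{-\imax}$, and since $\imax=\Theta(nr^d)$ by~\eqref{eq:imax-asymp} while $r\ge(\beta\log n/n)^{1/d}$, we have $\imax=\Omega(\beta\log n)$, so $\bar P(\imax)\,n^2\le n^{2-\Omega(\beta)}=O(1)$ once $\beta$ is taken large enough (a constant we are free to enlarge); this is dominated by $\sqrt n$ when $d=2$ and by $n^{1-1/d}$ when $d\ge3$. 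For the main sum I would use $\bar P(i-1)\le e^{-(i-1)}$ (with $\bar P(0)\le1$) together with the elementary fact that $\sum_{i\ge1}e^{-(i-1)}\,i^{a}\log^{b}(i+1)=O(1)$ for all constants $a,b\ge0$, so that any summand of the shape $e^{-(i-1)}\cdot(\text{quantity})\cdot i^{O(1)}\polylog(i)$ contributes only $O(\text{quantity})$ after summation; it therefore suffices to estimate $\EX[T_d(i)]$ up to polynomial-and-polylogarithmic factors in $i$.

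For $d=2$: by~\eqref{eq:T_2}, Corollary~\ref{cor:logs} with $\alpha=2$, and $\EX[|V_i|]=\Theta((w/r)^2 i)$ from Lemma~\ref{lemma:Vub}, we get $\EX[T_2(i)]=O\bigl((w/r)^2 i\log^2((w/r)^2 i+2)+\sqrt n\bigr)$. Using $(w/r)^2 i+2\le((w/r)^2+2)(i+1)$, $(a+b)^2\le2a^2+2b^2$, and $w>r$ (hence $(w/r)^2+2=O((1+w/r)^2)$), split $\log^2((w/r)^2 i+2)=O\bigl(\log^2(1+w/r)+\log^2(i+1)\bigr)$; summing against $e^{-(i-1)}$ yields the total $O\bigl((w/r)^2\log^2(1+w/r)+\sqrt n\bigr)$, which is~\eqref{item:d2}.

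For $d\ge3$: start from~\eqref{eq:T_d}, adding $\tgen^V(i)=O(n^{1-1/d}+|V_i|)$ for extracting $V_i$ from the range searching structure (the $O(|V_i|)$ part being absorbed into $|V_i|\log(|V_i|+2)$). Taking expectations and applying Corollary~\ref{cor:logs} with $\alpha=1$, Lemma~\ref{lemma:Eub}, and Lemma~\ref{lemma:tgenE}, we get $\EX[T_d(i)]=O\bigl(\EX[|V_i|]\log(\EX[|V_i|]+2)+\EX[|V_i|]\,\efun(i)+n^{1-1/d}\bigr)$. The key observation is that for $d\ge3$ the first term is dominated by the second, i.e.\ $\log(\EX[|V_i|]+2)=O(\efun(i))$: when $\efun(i)=(w/r)^{d-1}i$ this holds because $d-1\ge2$ makes $(w/r)^{d-1}i\ge(w/r)^2 i$ grow polynomially faster in $w/r$ and $i$ than $\log(\EX[|V_i|]+2)=O(\log((w/r)^d i)+1)$; when $\efun(i)=nr^d$ it holds because $\EX[|V_i|]\le n$ and $nr^d\ge\beta\log n$. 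Hence, using $\EX[|V_i|]=\Theta((w/r)^d i)$, $\EX[T_d(i)]=O\bigl((w/r)^d i\cdot\efun(i)+n^{1-1/d}\bigr)$. Now split according to the definition of $\efun$ in Lemma~\ref{lemma:Eub}: if $r\ge\sqrt[2d-1]{w^{d-1}/n}$ then the always-valid bound $\efun(i)\le(w/r)^{d-1}i$ yields $(w/r)^d i\cdot\efun(i)\le(w/r)^{2d-1}i^2$, and summation gives $O\bigl((w/r)^{2d-1}+n^{1-1/d}\bigr)$, which is~\eqref{item:big_r}; otherwise $r<\sqrt[2d-1]{w^{d-1}/n}$, i.e.\ $nr^d<(w/r)^{d-1}\le(w/r)^{d-1}i$ for every $i\ge1$, so $\efun(i)\equiv nr^d$ and $(w/r)^d i\cdot\efun(i)=nw^d i$; summation gives $O(nw^d+n^{1-1/d})$, which is~\eqref{item:small_r}.

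The summations themselves are mechanical once everything is in place; the real care goes into (i) the uniform estimate $\log(\EX[|V_i|]+2)=O(\efun(i))$ for $d\ge3$ over all admissible $w,r,i$, which is exactly what demotes the $|V_i|\log(|V_i|+2)$ Dijkstra term to lower order and delivers the clean polynomial bounds in~\eqref{item:big_r} and~\eqref{item:small_r}, and (ii) choosing $\beta$ large enough that the brute-force fallback contribution $\bar P(\imax)\,n^2$ is genuinely negligible. It is also worth noting that $\EX[T_d(i)]$ in~\eqref{eq:runtime} is an unconditional expectation, so the unconditional bounds from Corollary~\ref{cor:logs} and Lemmas~\ref{lemma:Eub} and~\ref{lemma:tgenE} substitute in directly.
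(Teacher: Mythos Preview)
Your proof is correct and follows essentially the same approach as the paper's own proof: bound the tail via $\bar P(\imax)\le e^{-\Theta(nr^d)}$, then push Lemma~\ref{lemma:barP}, Lemma~\ref{lemma:Vub}, Corollary~\ref{cor:logs}, and Lemmas~\ref{lemma:Eub}--\ref{lemma:tgenE} through the sum~\eqref{eq:runtime} and collapse the geometric series. If anything, your justification that $\log(\EX[|V_i|]+2)=O(\efun(i))$ for $d\ge3$ in both regimes of $\efun$ is more explicit than the paper's (which simply asserts the domination ``by Lemmas~\ref{lemma:Vub} and~\ref{lemma:Eub} --- if $r$ is sufficiently large'').
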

\begin{proof}
  In all cases we will bound the expected query time as given in sum~(\ref{eq:runtime}):
\begin{equation*}
  O\left(\sum_{i=1}^{\imax} \bar{P}(i-1)\cdot \EX[T_d(i)]+\bar{P}(\imax)\cdot n^2\right).
\end{equation*}

  First of all, note that by Equation~\ref{eq:imax-asymp}, Lemma~\ref{lemma:barP} and the assumption
  $r\geq (\beta \log(n)/n)^{1/d}$ where $\beta>1$ is a large enough constant, for some constant $\gamma>0$ we have:
  \begin{equation*}
    \imax\geq \gamma\cdot nr^d\geq \gamma\cdot\beta \log{n}
  \end{equation*}
  So, picking $\beta=2/\gamma$ gives us
  \begin{equation*}
    \bar{P}(\imax)\cdot n^2=O\left(e^{-\imax}\cdot n^2\right)=O\left(e^{-2\log{n}}\cdot n^2\right)=O(1).
  \end{equation*}
  Hence, we can focus on the below sum. By Lemma~\ref{lemma:barP}, we have:
  \begin{equation*}
    O\left(\sum_{i=1}^{\imax} \bar{P}(i-1)\cdot \EX[T_d(i)]\right)
    =O\left(\sum_{i=1}^\infty \EX[T_d(i)]e^{-(i-1)}\right)
    =O\left(\sum_{i=1}^\infty \EX[T_d(i)]e^{-i}\right).
  \end{equation*}
  In the following, we will use the asymptotic formula
  $\sum_{i=1}^\infty f(i)e^{-i}=O(1)$ that holds for any function $f(i)=\poly(i)$.
  Recall that $w>r$.

  Let us first prove item~(\ref{item:d2}). By \eqref{eq:T_2} and Lemma~\ref{lemma:Vub},
  we have:
  \begin{multline*}
    O\left(\sum_{i=1}^\infty \EX[T_2(i)]e^{-i}\right)\\
    =O\left(\sum_{i=1}^\infty (w/r)^2\cdot i\cdot \log^2\bigl((w/r)^2i+2\bigr)\cdot e^{-i}
    +\sum_{i=1}^\infty\sqrt{n}e^{-i}\right)\\
    =O\left(
      (w/r)^2\log^2(w/r+1)\sum_{i=1}^\infty i\log^2(i)\cdot e^{-i}
      +\sqrt{n}\sum_{i=1}^\infty e^{-i}
    \right)\\
    =O\left((w/r)^2\log^2(1+w/r)+\sqrt{n}\right).
  \end{multline*}
  Above we silently used Corollary~\ref{cor:logs} for $X=|V_i|$ and $\alpha=2$.
  Now let us prove items (\ref{item:big_r})~and~(\ref{item:small_r}).
  Let us first argue that the term $\EX[|V_i|\log{|V_i|}]$ is, by Corollary~\ref{cor:logs}, asymptotically
  dominated by the bound $\EX[|V_i|]\cdot O(\efun(i))$ on $\EX[|E_i|]$ from Lemma~\ref{lemma:Eub}.
  This follows by Lemmas \ref{lemma:Vub} and \ref{lemma:Eub} --
  if $r$ is sufficiently large.
  Thus by plugging that bound into \eqref{eq:T_d} we get
  \begin{multline*}
    O\left(\sum_{i=1}^\infty \EX[T_d(i)]e^{-i}\right)\\
    =\sum_{i=1}^\infty(w/r)^di\cdot
    \min\{nr^d, (w/r)^{d-1}i\}
    e^{-i}
    +\sum_{i=1}^\infty n^{1-1/d}e^{-i}
    \\
    =O\left(
    \min\left\{
    nw^d
    \sum_{i\ge 1} ie^{-i}
    ,
    (w/r)^{2d-1}
    \sum_{i=1}^\infty i^2e^{-i}
    \right\}
    +
    n^{1-1/d}
    \sum_{i=1}^\infty
    e^{-i}
    \right)
    \\
    =O\left(
    \min\left\{
    nw^d
    ,
    (w/r)^{2d-1}
    \right\}
    +
    n^{1-1/d}
    \right).\qedhere
  \end{multline*}
\end{proof}

\begin{remark}\label{dynamic}
  The described distance oracle can be very easily made dynamic with
  only polylogarithmic overhead. That is, we can support insertions
  and deletions of vertices of the weighted unit-disk graph $G$,
  in amortized $O(\polylog{n})$ time. To this end we simply
  replace the simplex range query data structure of Chan~\cite{Chan12}
  that we build in the preprocessing with that of Matousek~\cite{Matousek92}
  which allows for polylogarithmic amortized updates to the point set and
  has only polylogarithmically slower preprocessing and query times.
\end{remark}

\subsection{Faster generation of sets $V_i$}\label{s:improved}
We now show how to generate $V_i$ faster and 
without resorting to using simplex range query data
structure~\cite{Chan12}.
Let $k$ be an integer to be chosen later. 
Let us partition $[0,1]^d$ into
$k^d$ orthogonal \emph{cells}, each of size $(1/k)\times (1/k)\times\ldots\times (1/k)$.
For any $(i_1,\ldots,i_d)\in \{1,\ldots,k\}^d$,
the cell $C_{i_1,\ldots,i_d}$ equals $[(i_1-1)\cdot (1/k),i_1\cdot (1/k)]\times \ldots\times [(i_d-1)\cdot (1/k),i_d\cdot (1/k)]$.

During preprocessing, each point of $v\in V$ is assigned to an arbitrary cell $C_v$ out of $O(1)$ cells $v$ is contained in.
Clearly, for any cell $C$ we have $E[V\cap C]=n\cdot (1/k)^d$.

Upon query, each required $V_i=\bbox(i)\cap V$ is generated as follows.
We first find all the cells $\mathcal{C}_i$ that intersect $\bbox(i)$.
To this end, we start by adding the cell $C_s$ to $\mathcal{C}_i$.
For each added $C\in \mathcal{C}_i$ we iterate through its at most $3^d=O(1)$
neighboring cells and add them to $\mathcal{C}_i$ unless they do not intersect $\bbox(i)$.
Since the cells intersecting $\bbox(i)$ form a connected subset of all cells,
this algorithm is correct.
As each cell has $O(1)$ neighbors,
the time used to construct $\mathcal{C}_i$ is linear in the final size
of $\mathcal{C}_i$.
Finally, for each $C\in\mathcal{C}_i$, we iterate through the vertices $v$
assigned to $C$ (i.e., with $C_v=C$) and include $v$ in $V_i$ if $v\in \bbox(i)$.

Clearly, the expected running time of the above algorithm
is $O(|\mathcal{C}_i|\cdot (n\cdot (1/k)^d+1))$.
To proceed, we need to bound the size of $\mathcal{C}_i$.
Recall that $\bbox(i)$ has size $\chlenub(i)\times 2R\times\ldots\times 2R$,
where $R=\frac{1}{2}\sqrt{\chlenub(i)-w^2}=\Theta(w(i/nr^d)^{\frac{1}{d-1}})$.
\begin{lemma}
  We have $|\mathcal{C}_i|=O(k^dR^{d-1}+k)$.
\end{lemma}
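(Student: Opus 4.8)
The plan is to bound the number of cells intersecting the bounding box $\bbox(i)$ by a standard surface-area / volume argument. First I would recall that $\bbox(i)$ is an axis-aligned box of dimensions $\chlenub(i)\times 2R\times\cdots\times 2R$ (in the rotated coordinate system), but the cells of the partition are axis-aligned with respect to the \emph{original} coordinates; since rotation does not change the $d$-dimensional volume of a set and only inflates it by an $O(1)$ (dimension-dependent) factor when we pass to a bounding box in the new coordinates, it suffices to reason about a box whose side lengths are $\Theta(\chlenub(i))$ in one direction and $\Theta(R)$ in the remaining $d-1$ directions, up to constants depending only on $d$.

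The key steps, in order, are: (1) Observe that a cell $C$ intersects $\bbox(i)$ only if $C$ is contained in the \emph{inflated} box $\bbox(i)^+$ obtained by enlarging $\bbox(i)$ by $\sqrt{d}/k$ (one cell diameter) in every direction; hence $|\mathcal{C}_i|$ is at most the number of cells contained in $\bbox(i)^+$, which is at most $\operatorname{vol}(\bbox(i)^+)/(1/k)^d = k^d\cdot\operatorname{vol}(\bbox(i)^+)$. (2) Compute $\operatorname{vol}(\bbox(i)^+)=\big(\chlenub(i)+O(1/k)\big)\cdot\big(2R+O(1/k)\big)^{d-1}$. (3) Expand the product: the dominant term is $\chlenub(i)\cdot(2R)^{d-1}=\Theta(w\cdot R^{d-1})=O(R^{d-1})$ since $w\le\sqrt{d}\le O(1)$ (as $s,t\in[0,1]^d$), giving the $k^dR^{d-1}$ term; every other term carries at least one factor of $1/k$, and the largest such term is $\chlenub(i)\cdot(1/k)\cdot(2R)^{d-2}=O(R^{d-2}/k)$ when $d\ge 3$ and $O(1/k)$ when $d=2$, so multiplying by $k^d$ yields a contribution of $O(k^{d-1}R^{d-2})$ or $O(k^{d-1})$; I would then note $k^{d-1}R^{d-2}=O(k^dR^{d-1})+O(k)$ by AM–GM or by splitting on whether $kR\ge 1$, so all lower-order terms collapse into $O(k^dR^{d-1}+k)$. (4) Alternatively, and perhaps more cleanly, bound $|\mathcal{C}_i|$ directly as a product over coordinates: the box spans $O(kR+1)$ cells along each of the $d-1$ short axes and $O(k\chlenub(i)+1)=O(k+1)=O(k)$ cells along the long axis, so $|\mathcal{C}_i|=O(k)\cdot\prod_{j=1}^{d-1}O(kR+1)=O\big(k(kR+1)^{d-1}\big)$, and expanding $(kR+1)^{d-1}$ and using $\chlenub(i)=O(1)$ finishes the bound.

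I would present the coordinate-wise product bound as the main line of argument since it avoids the inflated-box volume bookkeeping: along each short axis the interval $[-R,R]$ meets at most $2Rk+2 = O(kR+1)$ consecutive cells, and along the long axis the interval of length $\chlenub(i)=O(1)$ meets at most $O(k)$ cells, but more carefully $O(k\chlenub(i)+1)$; multiplying gives $|\mathcal{C}_i| = O\big((k\chlenub(i)+1)(kR+1)^{d-1}\big)$. Expanding $(kR+1)^{d-1}$ via the binomial theorem, every term is of the form $\binom{d-1}{j}(kR)^j$; the top term $j=d-1$ times $k\chlenub(i)=O(k)$ gives $O(k^dR^{d-1})$, and any term with $j<d-1$ is $O((kR)^{d-2}k)=O(k^{d-1}R^{d-2})$ which is $O(k^dR^{d-1}+k)$ after applying Young's inequality $k^{d-1}R^{d-2} \le \varepsilon k^dR^{d-1} + C_\varepsilon k$ (or simply: if $kR\ge 1$ it is at most $k^dR^{d-1}$, and if $kR<1$ it is at most $k^{d-1}\le$ well, one must be slightly careful here and instead note $k^{d-1}R^{d-2}\cdot(1/k)\le k^{d-2}$ which is absorbed, iterating down to $k$). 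The one genuinely delicate point — and the step I expect to need the most care — is confirming that $\chlenub(i)=O(1)$, i.e. that the long side of the bounding box is bounded by an absolute constant; this follows from $\chlenub(i)=O(w)$ (equation~\eqref{eq:chlenub-asymp}) together with $w=\|t-s\|\le\sqrt{d}$ since both $s$ and $t$ lie in $[0,1]^d$, so that all the $O(k\chlenub(i))$ factors are genuinely $O(k)$ and no hidden dependence on $i$ or $w$ sneaks into the bound.
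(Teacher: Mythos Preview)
Your coordinate-wise approach (4), which you say you would present as the main line, has a genuine gap. The cells are axis-aligned in the \emph{original} coordinate system, while $\bbox(i)$ is aligned in the \emph{rotated} one; you cannot count ``$O(kR+1)$ cells along each short axis'' because the short axes of $\bbox(i)$ are not coordinate axes of the cell grid. Your opening remark that rotation ``only inflates by an $O(1)$ factor when we pass to a bounding box in the new coordinates'' is false for side lengths: the axis-aligned (original-coordinates) bounding box of a rotated $W\times 2R\times\cdots\times 2R$ box can have \emph{all} sides of order $W$, not $R$. So the product bound $O\bigl(k(kR+1)^{d-1}\bigr)$ is not justified as written.

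Your volume approach (1)--(3), on the other hand, is essentially correct and does not suffer from this problem: inflating $\bbox(i)$ by one cell-diameter $\sqrt{d}/k$ in every (rotated) direction gives a set that contains every intersecting cell, its volume is rotation-invariant, and dividing by the cell volume $1/k^d$ gives the bound. The expansion and the case split on $kR\gtrless 1$ to absorb the intermediate terms $k(kR)^j$ into $k^dR^{d-1}+k$ are fine (and cleaner than the Young/AM--GM detour you sketch). You should make this the main argument and drop (4).

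For comparison, the paper handles the rotation differently: it slices $\bbox(i)$ along its long (rotated) axis into $O(k)$ chunks of size $(1/k)\times 2R\times\cdots\times 2R$, covers each chunk by $O(\max(Rk,1)^{d-1})$ small cubes in rotated coordinates, and then observes that each such cube, having diameter $\sqrt{d}/k$, is covered by $O(d^{d/2})=O(1)$ original-coordinate cells. This gives $O(k\cdot\max(Rk,1)^{d-1})=O(k^dR^{d-1}+k)$ directly. Your volume argument is arguably slicker since it sidesteps the explicit chunk-then-recover-alignment step, but both are short and valid; the paper's version has the minor advantage of yielding $\max(Rk,1)^{d-1}$ in closed form without the binomial expansion and term-by-term absorption.
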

\begin{proof}
  Let us partition $\bbox(i)$ into $O(k)$ chunks of size
  $(1/k)\times 2R\times\ldots\times 2R$.
  Each such chunk is contained in a union of 
  $O(\max(Rk,1)^{d-1})$
  hybercubes of size $(1/k)\times \ldots\times (1/k)$.
  The longest diagonal of such a hybercube has length $\sqrt{d}/k$.
  As a result, such a hypercube lies in an orthogonally aligned hypercube of
  size $(\sqrt{d}/k)\times\ldots\times (\sqrt{d}/k)$.
  Clearly, such an aligned hypercube can be covered by $O(d^{d/2})=O(1)$ cells.
  As a result, a chunk can be covered using $O(\max(Rk,1)^{d-1})$ cells.
  Finally, we conclude that $\bbox(i)$ can be covered using
  $O(k\cdot \max(Rk,1)^{d-1})=O(k^dR^{d-1}+k)$ cells.
\end{proof}

By the above lemma, the expected running time is:
\begin{equation*}
  O\left(|\mathcal{C}_i|\cdot \left(n\frac{1}{k^d}+1\right)\right)=O\left((k^dR^{d-1}+k)\cdot \left(\frac{n}{k^d}+1\right)\right)=O\left(i\cdot(1/r^d+k^d/nr^d)+\frac{n}{k^{d-1}}+k\right).
\end{equation*}

By picking $k=\lfloor n^{1/d}\rfloor$, we obtain $O(i/r^d+n^{1/d})$ expected running time.
As a result, through all $i=1,\ldots,\imax$, the expected total time
required to construct the sets $|V_i|$ is
\begin{equation*}
  O\left(\sum_{i=1}^{\imax}\bar{P}(i-1)\cdot (i/r^d+n^{1/d})\right)=O\left((1/r^d+n^{1/d})\sum_{i=1}^{\imax}e^{-i}\cdot i\right)=O(1/r^d+n^{1/d}).
\end{equation*}

By combining the above with our earlier developments, we obtain
the following improved version of Theorem~\ref{t:query-general}.
\begin{theorem}\label{t:query-general-faster}
  The expected running time of the query algorithm on
  an $n$-vertex random weighted unit-disk graph in $[0,1]^d$
  with connectivity radius $r$ is
  \begin{enumerate}[(a)]
    \item\label{item:d2}
      $O\left((w/r)^2\log^2(1+w/r)+\sqrt{n}\right)$ for $d=2$,
    \item\label{item:big_r}
      $O\bigl((w/r)^{2d-1}+n^{1/d}\bigr)$ for $d\ge3$ and $r\ge\sqrt[2d-1]{w^{d-1}/n}$,
    \item\label{item:small_r}
      $O(nw^d+n^{1/d})$ otherwise.
  \end{enumerate}
\end{theorem}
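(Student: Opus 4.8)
The plan is to re-run the analysis behind Theorem~\ref{t:query-general} essentially line for line, the only change being that each set $V_i$ is now produced by the cell-based procedure of Section~\ref{s:improved} (with $k=\lfloor n^{1/d}\rfloor$) instead of by a simplex range query; this does not affect correctness, since the procedure still returns exactly $\bbox(i)\cap V$. Item~(a) then needs no new argument at all: for $d=2$ we have $n^{1-1/d}=\sqrt n=n^{1/d}$, so item~(a) is verbatim Theorem~\ref{t:query-general}\,(a) (there we keep the Wang--Xue algorithm on the graphs $G(i)$ and the $O(\sqrt n)$-per-step simplex-based generation). So the real work is confined to $d\ge 3$, where step~$i$ runs Dijkstra on $G(i)$ and thus costs $T_d(i)$ as in~\eqref{eq:T_d} plus the generation cost $\tgen^V(i)$.

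First I would observe that the only quantity affected by the change is $\tgen^V(i)$: by the analysis of Section~\ref{s:improved} the cell procedure produces $V_i$ in $O(i/r^d+n^{1/d})$ expected time, in place of the previous $O(n^{1-1/d}+|V_i|)$. Everything else is inherited unchanged: $\EX[|V_i|\log(|V_i|+2)]=O(\EX[|V_i|]\log(\EX[|V_i|]+2))$ by Corollary~\ref{cor:logs}; $\EX[|E_i|]+\EX[\tgen^E(i)]=O(\EX[|V_i|]\cdot\efun(i))$ by Lemmas~\ref{lemma:Eub} and~\ref{lemma:tgenE}; and, exactly as in the proof of Theorem~\ref{t:query-general}, for the range of $r$ under consideration the $\log$-term is dominated by $\EX[|V_i|]\cdot\efun(i)$. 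Hence $\EX[T_d(i)]=O(\EX[|V_i|]\cdot\efun(i)+i/r^d+n^{1/d})$.

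I would then substitute this into~\eqref{eq:runtime}. The tail $\bar P(\imax)\cdot n^2=O(1)$ exactly as in Theorem~\ref{t:query-general}, using $\imax=\Theta(nr^d)\ge\gamma\beta\log n$ from~\eqref{eq:imax-asymp}, Lemma~\ref{lemma:barP}, and $\beta$ taken large. For the head, Lemma~\ref{lemma:barP} gives $\bar P(i-1)\le e^{-(i-1)}$, and with $\sum_{i\ge1}\poly(i)\,e^{-i}=O(1)$ the term $\EX[|V_i|]\cdot\efun(i)$ sums — after the same split on the two regimes of $\efun$ as in Theorem~\ref{t:query-general} — to $O(\min\{nw^d,(w/r)^{2d-1}\})$, while $\sum_i e^{-(i-1)}\,O(i/r^d+n^{1/d})=O(1/r^d+n^{1/d})$. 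So the expected query time is $O(\min\{nw^d,(w/r)^{2d-1}\}+1/r^d+n^{1/d})$, and the remaining step is to absorb the $1/r^d$ into the stated bounds for cases~(b) and~(c) (which differ according to whether $r\ge\sqrt[2d-1]{w^{d-1}/n}$).

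This absorption is the only genuinely new point and the step I expect to be the main obstacle — it is exactly where the $n^{1-1/d}$ of Theorem~\ref{t:query-general} improves to $n^{1/d}$. Here I would use $d=O(1)$ (so $w\le\sqrt d=O(1)$) together with the standing assumption $r\ge(\beta\log n/n)^{1/d}$, and, keeping the sharper bound $\EX[\tgen^V(i)]=O(w^{d-1}i/r^d+n^{1/d})$ that Section~\ref{s:improved} actually yields before the factor $w^{d-1}=O(1)$ is dropped, verify by a short case split on the size of $w$ that $w^{d-1}/r^d=O((w/r)^{2d-1}+n^{1/d})$ in case~(b) and $w^{d-1}/r^d=O(nw^d+n^{1/d})$ in case~(c), e.g.\ using that $w^{d-1}/r^d\le(w/r)^{2d-1}$ whenever $w\ge r^{(d-1)/d}$ and that $1/r\le n^{1/d}$ when $w$ is comparable to $r$. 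All the rest — the Chernoff estimates of Lemma~\ref{l:chernoff} and Corollary~\ref{cor:logs}, the two-regime analysis of $\efun$, the geometric-series bounds, and the tail bound — carries over verbatim from the proof of Theorem~\ref{t:query-general}.
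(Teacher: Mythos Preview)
Your overall approach matches the paper's: its proof of Theorem~\ref{t:query-general-faster} is the single sentence ``By combining the above with our earlier developments,'' and you have correctly expanded what that combination means. You also correctly identify the absorption of the $V_i$-generation cost into the stated bounds as the only new step; the paper itself does not spell this absorption out.

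There is, however, a real gap in your absorption argument. The bound $\EX[\tgen^V(i)]=O(w^{d-1}i/r^d+n^{1/d})$ you propose sums to an $O(w^{d-1}/r^d)$ term, and your two-case split---``$w\ge r^{(d-1)/d}$'' versus ``$w$ comparable to $r$''---does not cover the intermediate regime where $w/r\to\infty$ yet $w<r^{(d-1)/d}$. For a concrete obstruction take $d=3$, $r=n^{-0.3}$, $w=n^{-0.26}$ (which lands in case~(b)): then $w^{2}/r^{3}=n^{0.38}$ while $(w/r)^{5}+n^{1/3}\approx n^{0.333}$, so $w^{d-1}/r^d$ is \emph{not} $O\bigl((w/r)^{2d-1}+n^{1/d}\bigr)$ here.

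The fix is to retain one more factor of $w$ that the calculation in Section~\ref{s:improved} discards: the first side of $\bbox(i)$ has length $\chlenub(i)=\Theta(w)$, so there are $\Theta(wk)$ chunks of width $1/k$, not merely $O(k)$. Carrying this $w$ through (together with the $w^{d-1}$ you already kept in $R^{d-1}$) yields the sharp estimate
\[
\EX[\tgen^V(i)]=O\bigl((w/r)^d\, i+n^{1/d}\bigr)=O\bigl(\EX[|V_i|]+n^{1/d}\bigr),
\]
and with this the absorption is immediate: $(w/r)^d\le(w/r)^{2d-1}$ in case~(b) since $w/r>1$ and $2d-1>d$, while $(w/r)^d=w^d/r^d\le nw^d$ in case~(c) since $nr^d\ge\beta\log n\ge 1$.
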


Finally, similarly as in Remark~\ref{dynamic}, we note that the the algorithm in this section
is also efficient if $G$ undergoes dynamic updates, such as point insertions and deletions.
The only data structure we use is the assignment of points to $\Theta(n)$ cubes.
Clearly, this assignment can be easily updated in constant time upon
an insertion or deletion of vertices.


\section{Channels}\label{s:channels}
The remaining part of the paper is devoted to proving the very convenient bound on $\bar{P}(i)$
from Lemma~\ref{lemma:barP}.

We start by introducing a notion of a \emph{channel}, which
is a parameterized grid-like object whose goal is to ``discretize'' the space
of possible shortest $s\to t$ paths in $\bbox(i)$.
The next step is to upper-bound the probability that we fail
to find reasonably short $s\to t$ path in the channel.
Afterwards, we are ready to give explicit formulas
for $\imax$ and $\chlenub(i)$ so that the asymptotic bounds~(\ref{eq:imax-asymp})~and~(\ref{eq:chlenub-asymp}),
as well as the bound $\bar{P}(i)\leq e^{-i}$
hold.

\newcommand{\channel}{\textrm{ch}}

Roughly speaking, a channel is a subset of vertices $V$ restricted to some subspace.
We generalize the channels defined in \cite[page 41]{sedgewick1986} to
$d$-dimensional space and arbitrary start/end vertices $s$ and $t$.

Recall that $w=||t-s||$ and $w>r$. Let $K\geq 1$ be the smallest integer such that $l=w/(4K+1) \le r/4$.
We also have
\begin{equation}\label{eq:ldef}
  l=\frac{w}{4(K-1)+1}\cdot\frac{4(K-1)+1}{4K+1}>r/4\cdot \frac{4K-3}{4K+1}\geq r/20.
\end{equation}
We are going to work in the coordinate system introduced in Section \ref{sec:query_alg}.
Let us denote the first axis by $x_0$ and the remaining axes by $x_1,\ldots,x_{d-1}$.

\begin{definition}[Box $R(z_0,z_1,\ldots,z_{d-1})$]\label{def:box}
  Let $h>0$ be fixed.
  Let us cut the space using planes $x_0=lz$ and $x_i=(1/2+z)h$
  for all integers $z$ and $i=1,\ldots,d-1$.

  For $z_0,z_1,\ldots,z_{d-1}\in\mathbb{Z}$, \emph{the box} $R(z_0,z_1,\ldots,z_{d-1})$ contains all points
  $(x_i)_{i=0}^{d-1}$ satisfying:
  \begin{itemize}
    \item $lz_0\le x_0 \le l(z_0+1)$,
    \item $(-1/2+z_i)h\le x_i\le (1/2+z_i)h$ for all $i=1,\ldots,d-1$.
  \end{itemize}
\end{definition}

Each box, defined as above, has size $l\times h\times\cdots\times h$.
Note that ${s\in R(0,0,\ldots,0)}$ and $t\in R(4K,0,\ldots,0)$.
Now suppose we want to travel from the box containing $s$ to
the box containing $t$
using \emph{jumps}, defined below.

\begin{definition}[Jumping between boxes]\label{def:jump}
  We say that we can \emph{jump} from box
  $R(z_0,z_1,\ldots,z_{d-1})$ to
  box $R(z'_0,z'_1,\ldots,z'_{d-1})$ iff
  \begin{itemize}
    \item $z'_0=z_0+2$,
    \item $|z'_i-z_i|=1$ for all $i=1,\ldots,d-1$.
  \end{itemize}
\end{definition}

Consider a jumping trip from $R(0,0,\ldots,0)$ to $R(4K,0,\ldots,0)$.
\begin{observation}[Reachable boxes]
  \label{obs:reachable_boxes}
  Let $B=R(z_0,z_1,\ldots,z_{d-1})$ be an arbitrary box.
  Suppose a sequence of jumps (as defined above) from $R(0,0,\dots,0)$ to $R(4K,0,\ldots,0)$
  goes through the box $B$. Then, the following conditions hold:
  \begin{itemize}
    \item $z_0=2k$ for some integer $k$, $0\le k\le 2K$,
    \item $|z_i|\le\min(k,2K-k)$ for all $i=1,\ldots,d-1$,
    \item $z_i\equiv k\pmod 2$.
  \end{itemize}
\end{observation}

Now we are ready to define the \emph{channel} parameterized by $h$.
\begin{definition}[Channel]\label{d:channel}
  A \emph{channel}  $\channel(h)$
  is a subset of $[0,1]^d$
  defined as the union of all boxes $B$ satisfying the
  conditions of Observation \ref{obs:reachable_boxes}.
\end{definition}

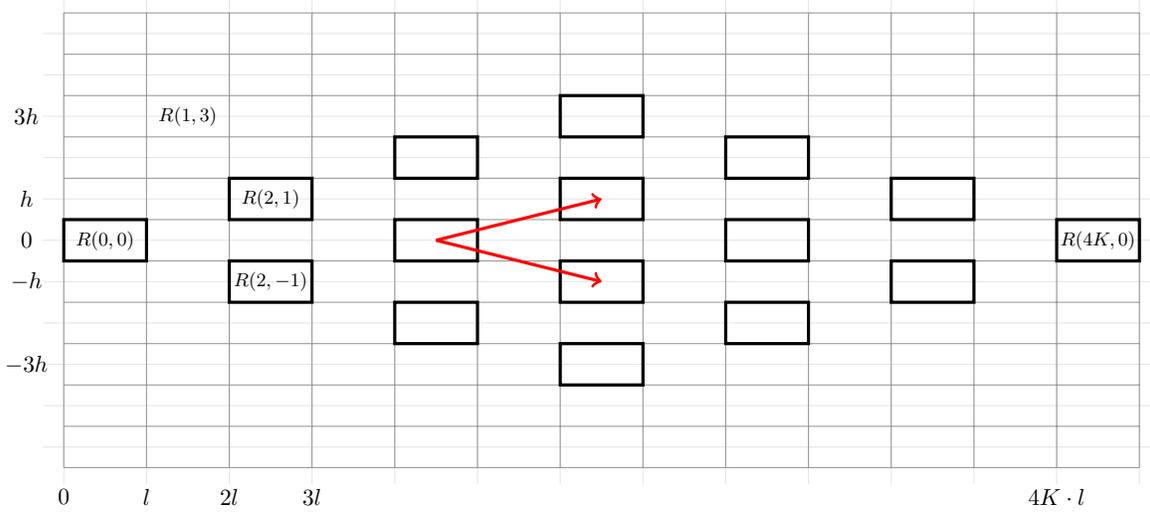
\begin{figure}
  \centering
  \begin{tikzpicture}[scale=0.55, every node/.style={scale=0.80}]

    \draw[xstep=2.0,ystep=1.0,color=black!10] (-0.5,-5.9) grid (26.5,5.9);
    \draw[xstep=2.0,ystep=1.0,color=black!40,shift={(0,0.5)}] (0,-6) grid (26,5);
    
    \draw[very thick] (0,-0.5) rectangle (2,0.5);
    
    \draw[very thick] (4,-1.5) rectangle (6,-0.5);
    \draw[very thick] (4,0.5) rectangle (6,1.5);

    \draw[very thick] (8,-2.5) rectangle (10,-1.5);
    \draw[very thick] (8,-0.5) rectangle (10,0.5);
    \draw[very thick] (8,1.5) rectangle (10,2.5);

    \draw[very thick] (12,-3.5) rectangle (14,-2.5);
    \draw[very thick] (12,-1.5) rectangle (14,-0.5);
    \draw[very thick] (12,0.5) rectangle (14,1.5);
    \draw[very thick] (12,2.5) rectangle (14,3.5);
    
    \draw[very thick] (16,-2.5) rectangle (18,-1.5);
    \draw[very thick] (16,-0.5) rectangle (18,0.5);
    \draw[very thick] (16,1.5) rectangle (18,2.5);

    \draw[very thick] (20,-1.5) rectangle (22,-0.5);
    \draw[very thick] (20,0.5) rectangle (22,1.5);

    \draw[very thick] (24,-0.5) rectangle (26,0.5);
  
    \node at (0,-6.2) {$0$};
    \node at (2,-6.2) {$l$};
    \node at (4,-6.2) {$2l$};
    \node at (6,-6.2) {$3l$};

    \node at (24,-6.2) {$4K\cdot l$};

    \node at (-0.9,0) {$0$};
    \node at (-0.9,1) {$h$};
    \node at (-0.9,-1) {$-h$};
    \node at (-0.9,3) {$3h$};
    \node at (-0.9,-3) {$-3h$};
  
    \node at (1,0) {\footnotesize $R(0,0)$};
    \node at (5,-1) {\footnotesize $R(2,-1)$};
    \node at (5,1) {\footnotesize $R(2,1)$};

    \node at (25,0) {\footnotesize $R(4K,0)$};
    \node at (3,3) {\footnotesize $R(1,3)$};

    \draw[->,very thick,color=red] (9,0) to (13,1);
    \draw[->,very thick,color=red] (9,0) to (13,-1);
  \end{tikzpicture}
  \caption{The rectangles represent boxes from Definition~\ref{def:box} for $d=2$. The red arrows represent possible jumps
  from a single box. The channel $\channel(h)$ for $K=3$ (see Definition~\ref{d:channel}) is represented by rectangles with
  thick black border.}\label{f:boxes}
\end{figure}

In other words, a channel $\channel(h)$ consists of all boxes that
can appear in a sequence of jumps from the box containing
$s$ to the box containing $t$.
Boxes, jumps, and channels are depicted in Figure~\ref{f:boxes}.

In the following, we say that a box $B$ is \emph{empty} if it does not contain any vertex of $G$.
\subsection{Paths in a channel}
Not all channels $\channel(h)$ are of our interest.
We need a condition on $h$ guaranteeing
that if we can jump from a non-empty box $B$ to another non-empty box $B'$
then there exists an appropriate edge in the graph,
namely if there is $u\in B\cap V$ and $v\in B'\cap V$ then $||u-v||\le r$.
Then, a sequence of jumps between non-empty boxes will certify the
existence of a path in $G$.

Observe that the distance between two opposite corners of $B$ and $B'$
(recall that $B$ and $B'$ have to satisfy Definition~\ref{def:jump})
is
\begin{equation*}\sqrt{(3l)^2+(d-1)(2h)^2}.\end{equation*}
We need this to be smaller than $r$.
Taking into account that $l\le r/4$, it is sufficient that
\begin{equation*}
  \Bigl(\frac34r\Bigr)^2+(d-1)(2h)^2\le r^2,
\end{equation*}
which gives
\begin{equation}
  \label{eq:h_bound}
  h\le\frac18\sqrt{\frac7{d-1}}\cdot r.
\end{equation}

\begin{definition}[Path in $\channel(h)$]\label{def:path}
  \emph{A path} in $\channel(h)$
  with $h$ satisfying \eqref{eq:h_bound}
  is a sequence of non-empty boxes $B_0,\ldots,B_{2K}$
  such that $B_0=R(0,0,\ldots,0)$, $B_{2K}=R(4K,0,\ldots,0)$, and
  we can jump from $B_j$ to $B_{j+1}$ for all $j=0,\ldots,{2K-1}$.
\end{definition}

Now we show that a path in $\channel(h)$ certifies the existence of an $s-t$ path in~$G$
which is not too long.
Specifically, we show the following bound.
\begin{lemma}[Channel induced path length]
  \label{lemma:channel_len}
  Suppose there is a path in $\channel(h)$. Then, there
  exists an
  $s-t$ path in $G$ of length no more than
  \begin{equation}
    \label{eq:channel_len}
    w\sqrt{1+40^2(d-1)(h/r)^2}.
  \end{equation}
\end{lemma}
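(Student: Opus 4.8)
The plan is to take a path $B_0,\ldots,B_{2K}$ in $\channel(h)$, pick one vertex $v_j\in B_j\cap V$ from each box (using that each $B_j$ is non-empty), and bound the length of the resulting $s\to v_1\to v_2\to\cdots\to v_{2K-1}\to t$ walk in $G$. First I would observe that $v_0:=s$ and $v_{2K}:=t$ can be taken as the chosen vertices of $B_0$ and $B_{2K}$, so we really only need the $2K$ edges $\{v_j,v_{j+1}\}$. By the definition of a path in $\channel(h)$ together with the bound \eqref{eq:h_bound} on $h$, each consecutive pair $v_j,v_{j+1}$ lies in boxes related by a jump, so (as computed just before Definition~\ref{def:path}) $||v_j-v_{j+1}||\le\sqrt{(3l)^2+(d-1)(2h)^2}\le r$; hence every $\{v_j,v_{j+1}\}$ is indeed an edge of $G$ and the walk is a genuine $s$--$t$ walk, whose length is an upper bound on a shortest path length.

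\textbf{Bounding the length.} The total length is $\sum_{j=0}^{2K-1}||v_j-v_{j+1}||$. I would split each displacement into its $x_0$-component and the orthogonal part. Along $x_0$, each jump advances $z_0$ by exactly $2$, i.e. the $x_0$-coordinate increases by $2l$ up to an error of at most $l$ (the within-box slack), and since there are $2K$ jumps covering the range from box $R(0,\ldots)$ to $R(4K,\ldots)$ the $x_0$-displacements telescope: $\sum_j (x_0(v_{j+1})-x_0(v_j)) = x_0(t)-x_0(s)=w$. For the orthogonal coordinates $x_1,\ldots,x_{d-1}$, each jump changes $z_i$ by $\pm1$, so $|x_i(v_{j+1})-x_i(v_j)|\le 2h$ for each $i$; combined with the telescoped longitudinal progress, the natural estimate is to bound each $||v_j-v_{j+1}||\le\sqrt{(\text{longitudinal step})^2+(d-1)(2h)^2}$ and then use concavity/Cauchy--Schwarz to pull the sum inside a single square root. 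Concretely, writing $a_j$ for the $x_0$-displacement of the $j$-th step (so $\sum_j a_j=w$, and each $a_j\le 3l$ but more usefully $a_j\ge l>0$), one gets $\sum_j\sqrt{a_j^2+(d-1)(2h)^2}\le\sqrt{\left(\sum_j a_j\right)^2+\left(\sum_j\frac{(d-1)(2h)^2}{a_j}\right)\cdot\left(\sum_j a_j\right)}$ is not quite the cleanest route; instead I would use that $\sqrt{a_j^2+c}\le a_j\sqrt{1+c/a_j^2}$ and bound $1/a_j\le 1/l\le 20/r$ via \eqref{eq:ldef}, then sum to get total length $\le w\sqrt{1+(\text{something})\cdot(d-1)(h/r)^2}$ after collecting the $\sum_j a_j = w$ factor.

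\textbf{Getting the constant $40^2$.} The cleanest argument: bound $||v_j-v_{j+1}||^2\le a_j^2 + (d-1)(2h)^2$ and factor $a_j^2$ out to write $||v_j-v_{j+1}||\le a_j\sqrt{1+(d-1)(2h)^2/a_j^2}\le a_j\sqrt{1+(d-1)(2h/l)^2}$ using $a_j\ge l$; since $2h/l \le 2h\cdot 20/r = 40h/r$ by \eqref{eq:ldef}, this is $a_j\sqrt{1+40^2(d-1)(h/r)^2}$, and summing over $j$ with $\sum_j a_j=w$ gives exactly \eqref{eq:channel_len}. I expect the main obstacle to be nailing down the bookkeeping that $\sum_j a_j=w$ exactly (as opposed to $\approx w$): this needs care about endpoints, namely that $s$ is genuinely in $B_0=R(0,0,\ldots,0)$ and $t$ in $B_{2K}=R(4K,0,\ldots,0)$ with the coordinate system of Section~\ref{sec:query_alg} placing $s$ at the origin and $t=(w,0,\ldots,0)$, so that the longitudinal displacements of the chosen vertices telescope \emph{exactly} to $x_0(t)-x_0(s)=w$ and each $a_j>0$. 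Everything else is the elementary inequality $\sqrt{a^2+c}\le a\sqrt{1+c/a^2}$ plus the lower bound $l>r/20$ already established in \eqref{eq:ldef}.
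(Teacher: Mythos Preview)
Your proposal is correct and follows essentially the same route as the paper: pick $v_j\in B_j\cap V$ with $v_0=s$, $v_{2K}=t$, write the $x_0$-displacement $a_j=x_0(v_{j+1})-x_0(v_j)\ge l$, factor $||v_{j+1}-v_j||\le a_j\sqrt{1+(d-1)(2h/l)^2}$, telescope $\sum_j a_j=w$, and finish with $l>r/20$ from \eqref{eq:ldef}. The only cosmetic remark is that your ``elementary inequality'' $\sqrt{a^2+c}\le a\sqrt{1+c/a^2}$ is in fact an equality for $a>0$; otherwise the argument is exactly the paper's.
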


\begin{proof}
  Let $u_j=(u_0^j,\ldots,u_{d-1}^j)$ be a vertex of $G$ in $B_j\cap V$.
  Additionally, set $u_0=s$ and $u_{2K}=t$.
  Recall that $u_j$ exists since each box in a path in $\channel(h)$ is non-empty.
  Consider subsequent vertices $u_j$ and $u_{j+1}$.
  Note that
  \begin{align*}
    ||u_{j+1}-u_j||=\sqrt{\sum_{i=0}^{d-1} (u_i^{j+1}-u_i^j)^2}=(u_0^{j+1}-u_0^j)\sqrt{1+\sum_{i=1}^{d-1}\left(\frac{u_i^{j+1}-u_i^j}{u_0^{j+1}-u_0^j}\right)^2}.
  \end{align*}
  Recall that we have $u_0^{j+1}-u_0^j\geq l$ and $u_i^{j+1}-u_i^j\leq 2h$ for $i\geq 1$. Hence,
  \begin{equation*}||u_{j+1}-u_j||\leq (u_0^{j+1}-u_0^j)\sqrt{1+(d-1)\frac{(2h)^2}{l^2}}.\end{equation*}
  Since $u_0\to u_1\to\ldots u_{2K}$ is a path in $G$, the
  length of a shortest $s-t$ path in $G$ can be bounded by:
  \begin{align*}
    \sum_{j=0}^{2K-1}||u_{j+1}-u_j|| &\leq
    \sqrt{1+(d-1)\frac{(2h)^2}{l^2}}\cdot\sum_{j=0}^{2K-1}(u_0^{j+1}-u_0^j)\\
    &=\sqrt{1+(d-1)\left(\frac{2h}{l}\right)^2}\cdot w.
  \end{align*}
  The claimed bound is obtained by $l\geq r/20$.
\end{proof}


\newcommand{\AntiP}{\hat P}

\subsection{Probability}

Denote by $q$ the probability that a single box is empty.
We have:
\begin{equation}\label{eq:q_ub}
  q = (1 - lh^{d-1})^n\le \exp({-nlh^{d-1}}).
\end{equation}

Denote by $\AntiP(h)$ the probability that no path exists in $\channel(h)$.
We are going to prove the following lemma.
\begin{lemma}
  \label{lemma:AntiP}
  There exists constants $q_0\in (0,1)$ and $c>0$ such that if $q<q_0$ then
  we have
  \begin{equation}
    \label{eq:cut_ub}
    \AntiP(h)\le (cq)^{2^{d-3}}.
  \end{equation}
\end{lemma}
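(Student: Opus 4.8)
The goal is to bound $\AntiP(h)$, the probability that no sequence of jumps through non-empty boxes connects $R(0,\ldots,0)$ to $R(4K,\ldots,0)$ inside the channel $\channel(h)$. The natural framework is a duality/cut argument: a path fails to exist precisely when there is a ``blocking set'' of empty boxes separating the source box from the target box within the channel. Since each box is independently empty with probability $q$, and the channel has $O(K\cdot(2K)^{d-1}) = O(K^d)$ boxes, the plan is to enumerate the possible minimal blocking structures, bound their number, and bound the probability of each by $q^{(\text{size})}$, then sum. The key difficulty — and the reason the paper stresses the connection to oriented percolation — is that the jump rule is directed (the $x_0$-coordinate strictly increases by $2$ at each step) and the ``sideways'' moves are forced to change each transverse coordinate by exactly $\pm 1$. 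This is a supercritical oriented-percolation-type regime when $q$ is small, and the correct way to get an exponentially small failure probability is via a contour/Peierls-type argument on the dual lattice rather than a naive union bound over all cuts.

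\textbf{Step 1: reduce to a two-dimensional core.} First I would reduce the $d$-dimensional channel to essentially a planar problem. Fix the transverse coordinates $z_2,\ldots,z_{d-1}$ appropriately (e.g. track only one transverse direction at a time, or project), so that the surviving freedom is a one-dimensional directed walk on $(z_0,z_1)$ obeying Observation~\ref{obs:reachable_boxes}. The remaining $d-2$ transverse directions will each contribute an independent copy of the same ``choice'' structure; this is exactly what produces the doubly-exponential exponent $2^{d-3}$ in \eqref{eq:cut_ub}: each additional dimension roughly squares the number of independent parallel attempts to cross. So the plan is to prove the base case $d=2$ (yielding $\AntiP(h) \le (cq)^{2^{-1}} = \sqrt{cq}$, interpreting the exponent for $d=2$, or more carefully $d=3$ giving $cq$) by an oriented-percolation contour argument, and then bootstrap: a failure in dimension $d$ forces simultaneous failure of two disjoint sub-channels of dimension $d-1$, so $\AntiP_d(h) \le (\AntiP_{d-1}(h))^2$, which unrolls to $(cq)^{2^{d-3}}$.

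\textbf{Step 2: the oriented-percolation contour bound (the hard part).} This is where the ``more delicate argument'' lives and where Sedgewick--Vitter erred. For the planar directed model, I would argue that if no open directed crossing exists, then by planar duality there is a dual ``contour'' — a connected path of closed (empty) boxes — cutting across the channel from one side to the other. The crucial point, and the subtle one, is to correctly characterize which box-sets can serve as such a blocking contour given the $z_0 \mapsto z_0+2$, $|z_i - z_i'| = 1$ jump rule: a blocking set need not be a simple geometric barrier, because the path can weave. One must set up the dual graph so that a blocking contour of $m$ empty boxes has at most $C^m$ combinatorial realizations (bounded branching) while the channel's cross-sectional width forces $m = \Omega(K)$ or at least $m \ge $ some absolute constant along any cut; then $\AntiP(h) \le \sum_{m \ge m_0} C^m q^m = O((Cq)^{m_0})$ once $q < q_0 := 1/(2C)$. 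I expect the genuine obstacle to be getting the branching constant $C$ right and, more importantly, ensuring the contour is forced to have length bounded below by a constant (so the bound is nontrivial) — handling short channels ($K$ small, even $K=1$) separately since then the ``contour'' may be a single empty box and one just needs $\AntiP(h) \le O(q)$ directly from the non-emptiness probability of the two or three relevant boxes.

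\textbf{Step 3: assemble.} Having the planar bound $\AntiP_2(h) \le cq$ (or the appropriately-indexed base case) and the recursive halving $\AntiP_d(h) \le \AntiP_{d-1}(h)^2$ from splitting the transverse hypercube cross-section into two independent halves each of which must itself be un-crossable, I would conclude $\AntiP_d(h) \le (cq)^{2^{d-3}}$ for all $d \ge 2$ (with the $d=2$ case reading as $(cq)^{1/2}$, which is still $<1$ for $q < q_0$, and is anyway weaker than what the $d=2$ planar argument directly gives, so no harm). Throughout, the hypothesis $q < q_0$ for a suitable absolute $q_0 \in (0,1)$ is exactly what makes the geometric series converge and keeps each halving step contractive; this will be combined later, in Section~\ref{sec:explicit}, with the estimate \eqref{eq:q_ub} on $q$ and a suitable choice of $h$ and $\chlenub(i)$ to force $q$ small enough and translate \eqref{eq:cut_ub} into the clean bound $\bar P(i) \le e^{-i}$ of Lemma~\ref{lemma:barP}.
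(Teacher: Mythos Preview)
Your proposal is correct and matches the paper's approach: induction on $d$ with the inductive step $\AntiP_d(h)\le(\AntiP_{d-1}(h))^2$ obtained from two disjoint $(d-1)$-dimensional subchannels (the paper pins the last coordinate to its extremal trajectory $z_{d-1}=\pm\min(k,2K-k)$, which is the concrete realization of your ``two independent halves''), and the base case $d=2$ proved by a contour/Peierls argument. The only notable difference in execution is that the paper's planar argument tracks the contour of the \emph{reachable region} rather than a dual path of empty boxes directly, charging each down/right step of that contour to an adjacent off cell to get $\ge s/4$ off cells along an $s$-step contour; this sidesteps the need to characterize minimal blocking sets and avoids the issue (illustrated in the paper's Figure~\ref{fig:cexp}) that off cells blocking a directed crossing need not themselves be connected.
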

\begin{proof}
  The proof will proceed by induction on $d$. We will thus use
  the notation $\AntiP_d(h)$ and $\channel_d(h)$ to underline
  which dimension $d$ we are currently referring to.

  The crux of the proof is to prove the induction base $d=2$,
  i.e., the bound \begin{equation*}\AntiP_2(h)\leq \sqrt{cq}\end{equation*} that holds
  for all $q<q_0$ for some constants $c,q_0$.
  This bound is proved in Section~\ref{sec:antiP_2d}.

For larger $d$ it is enough to prove that the bound
  \begin{equation*}
    \AntiP_d(h)\le\bigl(\AntiP_{d-1}(h)\bigr)^2.
  \end{equation*}
  holds.
  Let $s\in\{-1, 1\}$.
  Consider a subchannel $\channel^s_d(h)$ of the channel $\channel_d(h)$
  that
  is composed of the reachable boxes $B=R(z_0, z_1, \ldots, z_{d-1})$
  fulfilling the following conditions:
  \begin{itemize}
    \item $z_0=2k$ for some integer $k$, $0\le k\le 2K$,
    \item $|z_i|\le\min(k,2K-k)$ for all $i=1,\ldots,d-2$,
    \item $z_{d-1}=s\cdot\min(k,2K-k)$,
    \item $z_i\equiv k\pmod 2$.
  \end{itemize}
  Observe that the above conditions say
  that $B$ is a reachable box in
  $\channel_d(h)$
  with additional constraint $z_{d-1}=s\cdot\min(k,2K-k)$,
  which can also be written as $z_{d-1}=s\cdot\min(z_0,4K-z_0)/2$.

  Now one can see that $\channel^s_d(h)$ has exactly the same structure
  as $\channel_{d-1}(h)$:
  we can jump between boxes $R(z_0, \ldots, z_{d-2})$ and
  $R(z'_0, \ldots, z'_{d-2})$ in channel $\channel_{d-1}(h)$
  if and only if
  we can jump between boxes \begin{equation*}R(z_0, \ldots, z_{d-2},s\cdot\min(z_0,4K-z_0)/2)\end{equation*} and
  \begin{equation*}R(z'_0, \ldots, z'_{d-2},s\cdot\min(z'_0,4K-z'_0)/2)\end{equation*} in channel $\channel^s_d(h)$.
  Therefore the probability that no path exists in $\channel^s_d(h)$
  is bounded by $\AntiP_{d-1}(h)$.

  Observe that $\channel^{-1}_d(h)$ and $\channel^1_d(h)$ share
  only the corner boxes $R(0,0,\dots,0)$ and $R(4K,0,\ldots,0)$.
  Thus if no path exists in $\channel_d(h)$, there must be no paths
  in $\channel^{-1}_d(h)$ and $\channel^1_d(h)$ independently.
  This clearly happens with probability at most
  $\bigl(\AntiP_{d-1}(h)\bigr)^2$.
\end{proof}

\section{Choosing the size of $i$-th bounding box}\label{sec:explicit}
In this section we show how we derive the bound of Lemma~\ref{lemma:barP}
from Lemma~\ref{lemma:AntiP}.
We will also be able to explicitly define the value $\imax$ and the function
$\chlenub(i)$
so that the asymptotic bounds~(\ref{eq:imax-asymp})~and~(\ref{eq:chlenub-asymp}) hold.

Suppose that for a fixed $i$ we pick such $h_i$ that $\chlenub(i)=w\sqrt{1+40^2(d-1)(h_i/r)^2}$.
Then, by Lemma~\ref{lemma:channel_len}, a path in $\channel(h_i)$ certifies the existence
of a $s\to t$ path in $G$ of length at most $\chlenub(i)$.
Such a path is clearly contained in $\ellipsoid(i)$, and thus
also in $\bbox(i)$.
As a result, we conclude
\begin{equation*}
  \bar{P}(i)\leq \AntiP(h_i).
\end{equation*}
Given this, and
since we want the probability $\bar{P}(i)$ to decay exponentially
with $i$,
we would like to choose $h_i$ in a such way that
$\AntiP(h_i)\le e^{-i}$, which will imply $\bar{P}(i)\leq e^{-i}$.

Suppose $\exp(-nlh^{d-1})<q_0$, where $q_0$ is the constant of Lemma~\ref{lemma:AntiP}.
By combining inequality~\eqref{eq:q_ub} and the bound of Lemma~\ref{lemma:AntiP}, we have
\begin{equation*}
  \AntiP(h)\le \exp\left(2^{d-3}(\log c - nlh^{d-1})\right).
\end{equation*}
In order to guarantee $\AntiP(h_i)\leq e^{-i}$, it is thus enough to have
\begin{align}
  2^{d-3}(\log c - nlh_i^{d-1})&\le-i\notag\\
  \log c + \frac{i}{2^{d-3}}&\le nlh_i^{d-1},\label{eq:step_to_h_0}
\end{align}
and
\begin{equation*}\label{eq:bound_q0}
  \log{\frac{2}{q_0}}\leq nlh^{d-1}_i.
\end{equation*}
Let $c'$ be such a positive constant that for $i\ge 1$ we have
\begin{equation}\label{eq:cprim}
  \max\left(\log{\frac{2}{q_0}},\log c + \frac{i}{2^{d-3}}\right)\le c'\cdot i.
\end{equation}
Now let $h_0$ be such that $h_0^{d-1}=\frac{c'}{nl}$, and let
\begin{equation}\label{eq:hi}
  h_i=h_0\cdot i^{\frac{1}{d-1}}.
\end{equation}
Then we have
\begin{equation*}
  \max\left(\log{\frac{2}{q_0}},\log c + \frac{i}{2^{d-3}}\right)\leq c'\cdot i=c'\cdot \left(\frac{h_i}{h_0}\right)^{d-1}=c'\cdot h_i^{d-1}\cdot \frac{nl}{c'}=nlh_i^{d-1}.
\end{equation*}
So indeed, if $h_i$ is defined as in (\ref{eq:hi}), we have
$\AntiP(h_i)\leq e^{-i}$.
So the explicit formula for $\chlenub(i)$ is:
\begin{equation*}\chlenub(i)=w\sqrt{1+40^2(d-1)\left(\frac{c'i}{nlr^{d-1}}\right)^{\frac{2}{d-1}}},\end{equation*}
  where $c'$ is a constant defined in~(\ref{eq:cprim}) and $l=\Theta(r)$ is as defined in~(\ref{eq:ldef}).
It is now verified that $\chlenub(i)$ indeed satisfies the
asymptotic formula~(\ref{eq:chlenub-asymp}) from Section~\ref{sec:oracle}.

The above proof derivation of $\bar{P}(i)\leq e^{-i}$ is only
correct if $h_i$ is not too large. Namely, recall that the bound~(\ref{eq:h_bound})
requires that
\begin{equation}
  h_i\le\frac18\sqrt{\frac7{d-1}}\cdot r.
\end{equation}
Since $h_i$ is an increasing function of $i$, this imposes a constraint
on maximum possible $i=\imax$ allowed. Hence, we need to have

\begin{align*}
  \left(\frac{c'\cdot \imax}{nl}\right)^{\frac{1}{d-1}}&\le\frac18\sqrt{\frac7{d-1}}\cdot r.\\
  \imax&= \left\lfloor \frac{1}{c'}\cdot  \left(\frac18\sqrt{\frac7{d-1}}\cdot r\right)^{d-1}\cdot nl\right\rfloor=\Theta(nr^d).
\end{align*}

Observe that the above definition of $\imax$ agrees with the bound~\eqref{eq:imax-asymp} from Section~\ref{sec:oracle}.

\section{Existence of a path in a two-dimensional grid}
\label{sec:antiP_2d}

Recall that our goal is to prove the induction
base of Lemma~\ref{lemma:AntiP} for $d=2$.
More concretely, we need to prove $\AntiP_2(h)\leq \sqrt{cq}$
for a sufficiently small $q<q_0$ and some positive constant $c$.

\paragraph{Grid formulation.} It is beneficial to reformulate our problem in terms
of reachability in directed grids.
Suppose we are given a two-dimensional grid with corners
in $(0,0)$ and $(n,n)$.\footnote{In this section we completely
forget about the graph $G$ and use $n$ to denote the grid size.}
The grid partitions $[0,n]\times [0,n]$
into $n^2$ square cells: we identify the cells
by the coordinates of its upper right corner.

\newcommand{\AntiQ}{\widetilde{P}}

The cells can be \emph{on} or \emph{off}. We consider paths from cell $(1,1)$
to cell $(n,n)$, where one can go from cell $a$
to cell $b$ if $b$ is the upper or the right neighbor of $a$
and both these cells are on.
Since one cannot go from cell $b$ to cell $a$ in this
case, the possible movements between adjacent cells
are described using a directed graph.

Each cell is \emph{off} with probability $q$ and \emph{on}
with probability $p=1-q$, independently from all the other cells.
Our goal is to upper-bound the probability $\AntiQ(q)$ that there
is no path between $(1,1)$ and $(n,n)$ using a function of $q$.

\paragraph{Correspondence to the original problem.}
Let us now describe how this reformulated problem corresponds
to the original problem.
Note that the boxes in channel $\channel(h)$ in fact
form a $n\times n$ grid, where $n=K+1$. The correspondence is as follows:
we map the box $R(x,y)$ of $\channel(h)$ to cell $(x/4-y/2+1,x/4+y/2+1)$
of the grid.
Then, there is a 1-1 correspondence between
path from $R(0,0)$ to $R(4K,0)$ as in Definition~\ref{def:path},
and paths between cell $(1,1)$ to $(n,n)$ that can only
proceed upwards or to the right.
We have $\AntiP(h)=\AntiQ(q)$.

\paragraph{Proof.}
Recall that our goal is to upper-bound the probability $\AntiQ(q)$ that there
is no path between $(1,1)$ and $(n,n)$ using a function of $q$.
To this end, consider an event when such a path does not exist.
Consider the last cell $a$ reachable from $(1,1)$
out of $(1,1),(1,2),\ldots,(1,n),(2,n),\ldots,(n-1,n)$ (i.e.,
from the ``topmost'' possible path).
Similarly, let $b$ be the last cell out
of
\begin{equation*}
  (1,1),(2,1),\ldots,(n,1),(n,2),\ldots,(n,n-1)
\end{equation*}
that is reachable from $(1,1)$ (i.e., from the ``bottommost'' possible path).

We distinguish four cases depending on the pair $a,b$:
\begin{enumerate}
  \item $a=(1,k)$ and $b=(l,1)$ for some $1\leq k,l<n$,
  \item $a=(n-k,n)$ and $b=(n,n-l)$ for some $1\leq k,l<n$,
  \item $a=(1,k)$ and $b=(n,n-l)$ for some $1\leq k,l<n$,
  \item $a=(n-k,n)$ and $b=(l,1)$ for some $1\leq k,l<n$,
\end{enumerate}

For $i=1,2,3,4$, let $\AntiQ_i(q)$ be the probability
that there is no $(1,1)\to (n,n)$ path and case $i$ occurs.
Clearly, $\AntiQ(q)=\sum_{i=1}^4 \AntiQ_i(q)$.

Consider the first case when $a=(1,k)$ and $b=(l,1)$,
where $1\leq k,l< n$.
Consider the ``contour'' of the area reachable from $(1,1)$
obtained by going around
the area's boundary while keeping the right hand in contact with it
at all times. In particular, consider the contiguous part
$\mathcal{C}$ of that contour starting at $(0,k)$
and ending at $(l,0)$. Intuitively, since the area reachable
from $(1,1)$ is connected, and by the definition of $a,b$, the curve
$\mathcal{C}$ does not intersect the grid's boundary
except at its endpoints $(0,k),(l,0)$. See Figure~\ref{fig:grid}.

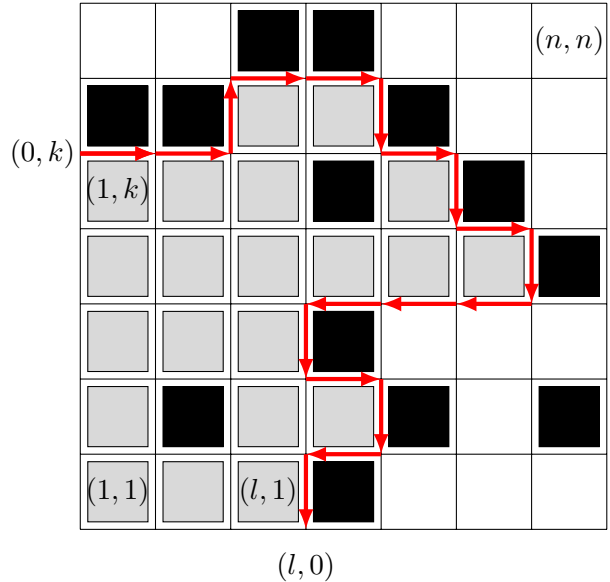
\begin{figure}[h!]
  \centering
  \begin{tikzpicture}
\draw (0,0) grid (7,7);
    \draw[fill=black] (0.1,5.1) rectangle (0.9,5.9);
    \draw[fill=black] (1.1,5.1) rectangle (1.9,5.9);

    \draw[fill=black] (2.1,6.1) rectangle (2.9,6.9);
    \draw[fill=black] (3.1,6.1) rectangle (3.9,6.9);

    \draw[fill=black] (4.1,5.1) rectangle (4.9,5.9);
    \draw[fill=black] (5.1,4.1) rectangle (5.9,4.9);
    \draw[fill=black] (6.1,3.1) rectangle (6.9,3.9);

    \draw[fill=black] (3.1,4.1) rectangle (3.9,4.9);

    \draw[fill=black] (3.1,2.1) rectangle (3.9,2.9);

    \draw[fill=black] (4.1,1.1) rectangle (4.9,1.9);

    \draw[fill=black] (1.1,1.1) rectangle (1.9,1.9);

    \draw[fill=black] (6.1,1.1) rectangle (6.9,1.9);

    \draw[fill=gray!30] (0.1,0.1) rectangle (0.9,0.9);

    \draw[fill=gray!30] (0.1,1.1) rectangle (0.9,1.9);
    \draw[fill=gray!30] (0.1,2.1) rectangle (0.9,2.9);
    \draw[fill=gray!30] (0.1,3.1) rectangle (0.9,3.9);
    \draw[fill=gray!30] (0.1,4.1) rectangle (0.9,4.9);

    \draw[fill=gray!30] (1.1,2.1) rectangle (1.9,2.9);
    \draw[fill=gray!30] (1.1,3.1) rectangle (1.9,3.9);
    \draw[fill=gray!30] (1.1,4.1) rectangle (1.9,4.9);

    \draw[fill=gray!30] (2.1,2.1) rectangle (2.9,2.9);
    \draw[fill=gray!30] (2.1,3.1) rectangle (2.9,3.9);
    \draw[fill=gray!30] (2.1,4.1) rectangle (2.9,4.9);

    \draw[fill=gray!30] (2.1,5.1) rectangle (2.9,5.9);
    \draw[fill=gray!30] (3.1,5.1) rectangle (3.9,5.9);

    \draw[fill=gray!30] (3.1,3.1) rectangle (3.9,3.9);
    \draw[fill=gray!30] (4.1,3.1) rectangle (4.9,3.9);
    \draw[fill=gray!30] (5.1,3.1) rectangle (5.9,3.9);

    \draw[fill=gray!30] (4.1,4.1) rectangle (4.9,4.9);

    \draw[fill=black] (3.1,0.1) rectangle (3.9,0.9);

    \draw[fill=gray!30] (1.1,0.1) rectangle (1.9,0.9);
    \draw[fill=gray!30] (2.1,0.1) rectangle (2.9,0.9);

    \draw[fill=gray!30] (2.1,1.1) rectangle (2.9,1.9);
    \draw[fill=gray!30] (3.1,1.1) rectangle (3.9,1.9);

    \draw[ultra thick,color=red,-latex] (0,5) to (1,5);

    \draw[ultra thick,color=red,-latex] (0,5) to (1,5);
    \draw[ultra thick,color=red,-latex] (1,5) to (2,5);
    \draw[ultra thick,color=red,-latex] (2,5) to (2,6);
    \draw[ultra thick,color=red,-latex] (2,6) to (3,6);
    \draw[ultra thick,color=red,-latex] (3,6) to (4,6);
    \draw[ultra thick,color=red,-latex] (4,6) to (4,5);
    \draw[ultra thick,color=red,-latex] (4,5) to (5,5);
    \draw[ultra thick,color=red,-latex] (5,5) to (5,4);
    \draw[ultra thick,color=red,-latex] (5,4) to (6,4);
    \draw[ultra thick,color=red,-latex] (6,4) to (6,3);
    \draw[ultra thick,color=red,-latex] (6,3) to (5,3);
    \draw[ultra thick,color=red,-latex] (5,3) to (4,3);
    \draw[ultra thick,color=red,-latex] (4,3) to (3,3);
    \draw[ultra thick,color=red,-latex] (3,3) to (3,2);
    \draw[ultra thick,color=red,-latex] (3,2) to (4,2);
    \draw[ultra thick,color=red,-latex] (4,2) to (4,1);
    \draw[ultra thick,color=red,-latex] (4,1) to (3,1);
    \draw[ultra thick,color=red,-latex] (3,1) to (3,0);

    \node at (0.5,4.5) {$(1,k)$};
    \node at (2.5,0.5) {$(l,1)$};

    \node at (-0.5,5) {$(0,k)$};
    \node at (3,-0.5) {$(l,0)$};

    \node at (0.5,0.5) {$(1,1)$};
    \node at (6.5,6.5) {$(n,n)$};
\end{tikzpicture}
\caption{The black cells are precisely those that are off. The reachable area is in gray. The non-reachable cells that are on are white.
The red arrows a part of contour~$\mathcal{C}$ from $(0,k)$ to $(l,1)$ with $s=18$ steps,
assuming we are in case 1.\label{fig:grid}}
\end{figure}

Observe that the walk around $\mathcal{C}$
consists of a number $s$ of unit-length \emph{steps}, each going
either up (U), down (D), left (L), or right (R).
Let $c_{\text{U}},c_{\text{D}},c_{\text{L}},c_{\text{R}}$
denote the counts of the respective types of steps
in $\mathcal{C}$.
Clearly, we have ${c_{\text{D}}-c_{\text{U}}=k}$
and $c_{\text{R}}-c_{\text{L}}=l$.
Hence, $c_{\text{D}}\geq c_{\text{U}}$ and $c_{\text{R}}\geq c_{\text{L}}$,
and therefore $c_{\text{D}}+c_{\text{R}}\geq \frac{1}{2}(c_{\text{D}}+c_{\text{U}}+c_{\text{L}}+c_{\text{R}})=s/2$.
We also have $s\geq k+l$.

Note that for each ``down'' step $(x,y)\to(x,y-1)$,
the cell $(x+1,y)$ is necessarily \emph{off},
since otherwise we would reach it.
Similarly, for each ``right'' step $(x,y)\to(x+1,y)$,
the cell $(x+1,y+1)$ is necessarily off.
As the steps in $\mathcal{C}$ are distinct,
each cell that is off can be ``charged'' this way
to at most two steps (at most one ``right'' step, and at most one ``down'').
As a result, $\mathcal{C}$ certifies the existence
of at least $\frac{c_{\text{D}}+c_{\text{R}}}{2}\geq s/4$
cells that are off.

Let $\AntiQ(q,\mathcal{C})$ be the probability
that an $s$-step curve $\mathcal{C}$ is the $(0,k)\to (l,0)$ part of the
contour of the reachable area.
So, we have $\AntiQ(q,\mathcal{C})\leq q^{\lceil s/4\rceil}\leq q^{s/4}.$

On the other hand, given $k,l,s$, the number
of possible $s$-step curves $\mathcal{C}$ from $(0,k)$ to $(l,0)$
is at most $3^s$, as
each subsequent step can be chosen to be
in at most $3$ distinct directions.
As a result, the probability $\AntiQ_1(q)$ that the case~1 arises,
i.e., $a$ is of the
form $(1,k)$, and $b$ is of the form $(l,1)$
for $1\leq k,l<n$ is no more than:
\begin{equation*}\AntiQ_1(q)\leq \sum_{k,l\geq 1}\sum_{s\geq k+l}3^s\cdot q^{s/4}=\sum_{k,l\geq 1}\sum_{s\geq k+l}\left(3q^{1/4}\right)^{s}.\end{equation*}
Set $\alpha=3q^{1/4}$.
Then we have:
\begin{equation*}\AntiQ_1(q)\leq \sum_{k,l\geq 1}\alpha^{k+l}\frac{1}{1-\alpha}=\frac{1}{1-\alpha}\sum_{k\geq 1}\alpha^{k+1}\cdot\frac{1}{1-\alpha}=\frac{\alpha^2}{(1-\alpha)^3}.\end{equation*}

The second case when $a=(n-k,n)$ and $b=(n,n-l)$ for $1\leq k,l<n$
is symmetric
and leads to the same bound.
Thus, $\AntiQ_2(q)\leq \frac{\alpha^2}{(1-\alpha)^3}$.

In the third case we have $a$
of the form $(1,k)$ and $b$ of the form $(n,n-l)$
for some $k,l\in \{1,\ldots,n-1\}$.
We consider (parts of) contours $\mathcal{C}$ starting
at $(0,k)$ and ending at $(n,n-l)$ and
thus $c_{\text{R}}-c_{\text{L}}=n$
and $|c_{\text{D}}-c_{\text{U}}|\leq n$.

We also have
\begin{equation*}s=c_{\text{R}}+c_{\text{L}}+c_{\text{D}}+c_{\text{U}}\leq 2c_{\text{R}}-n+c_{\text{D}}+(c_{\text{D}}+|c_{\text{D}}-c_{\text{U}}|)\leq 2(c_{\text{R}}+c_{\text{D}}),\end{equation*}
so again, by the same reasoning, a (part of) contour $\mathcal{C}$ with $s$ steps
certifies that at least $\lceil s/4\rceil$ cells are off,
and we can obtain the same bound
$\AntiQ_3(q)\leq \frac{\alpha^2}{(1-\alpha)^3}$.
on the probability that case $3$ arises.
Case $4$ is, again, symmetric to case $3$.

Since any of the described $4$ cases can apply, the
probability that one
cannot reach cell $(n,n)$ from cell $(1,1)$, is bounded by:
\begin{equation*}\AntiQ(q)=\sum_{i=1}^4 \AntiQ_i(q)\leq \frac{4\alpha^2}{(1-\alpha)^3}.\end{equation*}
Assume $q<\frac{1}{2^{10}\cdot 3^4}$. Then $\alpha=\frac{1}{2^{2.5}}<1/2$, and thus:

\begin{equation*}\AntiQ(q)\leq \frac{4\alpha^2}{(1-\alpha)^3}\leq \frac{4\alpha^2}{(1/2)^3}\leq  32\alpha^2=32\cdot 9\cdot q^{1/2}<1.\end{equation*}

To conclude, we have proved that for $c=(32\cdot 9)^2$ and $q<\frac{1}{2^{10}\cdot 3^4}$
we indeed have $\AntiQ(q)\leq \sqrt{cq}$ as desired.

\subsection{A shortcoming in~\cite{sedgewick1986}.}
\label{ashortcoming}
Sedgewick and Vitter~\cite[pages 41-42]{sedgewick1986} also derive a bound $\AntiQ(q)=O(\poly{q})$.
However, we believe they argument to be flawed.
When bounding $\AntiQ(q)$, they argue that unless a (directed, as defined above)
path from $(1,1)$ to $(n,n)$ exists there has to be an ``antipath''.
An antipath is defined to be a sequence of cells that are ``off''
such that each subsequent cell is a neighbor of the previous
one, and $(1,1)$ cannot reach $(n,n)$.
However, as the example in Figure~\ref{fig:cexp} shows, $(n,n)$ may become
unreachable from $(1,1)$ even if no antipath exists,
i.e., when the cells that are off do not form a path, regardless
of how exactly we define neighborhood between cells (e.g., neighboring
sides, or neighboring corners).
As a result, their bound on $\AntiQ(q)$ does not cover
all possible cases and thus underestimates the probability that
no path from $(1,1)$ to $(n,n)$ exists.

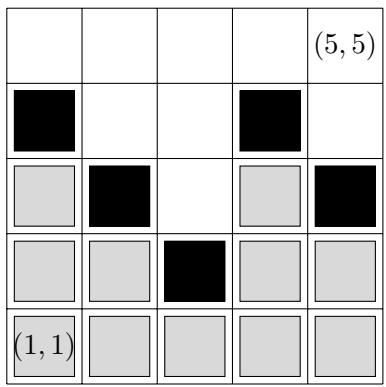
\begin{figure}[h!]
  \centering
  \begin{tikzpicture}
\draw (0,0) grid (5,5);

    \draw[fill=black] (0.1,3.1) rectangle (0.9,3.9);
    \draw[fill=black] (1.1,2.1) rectangle (1.9,2.9);
    \draw[fill=black] (2.1,1.1) rectangle (2.9,1.9);

    \draw[fill=black] (3.1,3.1) rectangle (3.9,3.9);
    \draw[fill=black] (4.1,2.1) rectangle (4.9,2.9);

    \draw[fill=gray!30] (0.1,2.1) rectangle (0.9,2.9);
    \draw[fill=gray!30] (0.1,1.1) rectangle (0.9,1.9);
    \draw[fill=gray!30] (0.1,0.1) rectangle (0.9,0.9);

    \draw[fill=gray!30] (1.1,1.1) rectangle (1.9,1.9);
    \draw[fill=gray!30] (1.1,0.1) rectangle (1.9,0.9);

    \draw[fill=gray!30] (2.1,0.1) rectangle (2.9,0.9);
    \draw[fill=gray!30] (3.1,0.1) rectangle (3.9,0.9);
    \draw[fill=gray!30] (4.1,0.1) rectangle (4.9,0.9);

    \draw[fill=gray!30] (3.1,1.1) rectangle (3.9,1.9);
    \draw[fill=gray!30] (4.1,1.1) rectangle (4.9,1.9);

    \draw[fill=gray!30] (3.1,2.1) rectangle (3.9,2.9);

    \node at (0.5,0.5) {$(1,1)$};
    \node at (4.5,4.5) {$(5,5)$};
  \end{tikzpicture}
  \caption{The cells that are \emph{off} (black) do not have to form a path
  in any sense to disconnect $(1,1)$ from $(5,5)$.
  The area reachable from $(1,1)$ is gray.
  \label{fig:cexp}}
\end{figure}

\bibliographystyle{plainurl}
\bibliography{runit}

\end{document}